\newcommand{\keywordname}{Palavras-chaves}}
\newcommand{\keywordname}{Keywords}}
\newtheorem{theorem}{Theorem}[section]
\newtheorem{corollary}[theorem]{Corollary}
\newtheorem{remark}[theorem]{Remark}
\newtheorem{algorithm}[theorem]{Algorithm}
\newcommand{\printtitle}{%
\makeatletter
\if@twocolumn

\twocolumn[%
  \maketitle
  \begin{onecolabstract}
    \myabstract
  \end{onecolabstract}
  \begin{center}
    \small
    \textbf{\keywordname}
    \\\medskip
    \mykeywords
  \end{center}
  \bigskip
]
\saythanks
\else
  \maketitle
  \begin{onecolabstract}
    \myabstract
  \end{onecolabstract}
  \begin{center}
    \small
    \textbf{\keywordname}
    \\\medskip
    \mykeywords
  \end{center}
  \bigskip
  \onehalfspacing
\fi
\makeatother
}
\author{%
Sirani M. Perera%
\thanks{Department of Mathematics,
Embry-Riddle Aeronautical University,
Daytona Beach, FL 32114, USA
(e-mail: pereras2@erau.edu)}
\and
Arjuna Madanayake%
\thanks{Department of Electrical and Computer Engineering, Florida International University, Miami, FL 33174 USA
(e-mail: amadanay@fiu.edu)}
\and
R.~J.~Cintra%
\thanks{Departamento de Estat\'{\i}stica,
Universidade Federal de Pernambuco, Recife, PE
50740540 Brazil
(e-mail: rjdsc@de.ufpe.br)}
}
\title{%
Radix-2 Self-Recursive Sparse Factorizations of  Delay Vandermonde Matrices for Wideband Multi-Beam Antenna Arrays}
\newcommand{\myabstract}{%
This paper presents a self-contained factorization for the Vandermonde matrices associated with  true-time delay based wideband analog  multi-beam beamforming using antenna arrays. The proposed factorization contains sparse and orthogonal matrices. Novel self-recursive radix-2 algorithms for Vandermonde matrices associated with true time delay based  delay-sum filterbanks are presented to reduce the circuit complexity of multi-beam analog beamforming systems. The proposed algorithms for Vandermonde matrices by a vector attain $\mathcal{O}(N \log N)$ delay-amplifier circuit counts.
{Error bounds for the Vandermode matrices associated with true-time delay are established and then analyzed for numerical stability.}
The potential for real-world circuit implementation of the proposed algorithms will be shown through signal flow graphs that are the starting point for high-frequency analog circuit realizations.
}
\newcommand{\mykeywords}{%
Sparse matrices,
Algorithm design and analysis,
Computational complexity,
Accuracy,
Error analysis,
Fast Fourier transforms,
Antenna arrays,
Integrated circuits,
Wireless communication
}
\date{}
\begin{document}

\printtitle

\section{Introduction}
\label{intro}
{T}{he} realization of narrowband discrete Fourier transform (DFT) multi-beams is itself a hard  engineering problem due to circuit complexity of the aperture transceivers. For example,  the phasing network required for forming $N$ beams requires $N^2$ phasing elements. The DFT is a linear operation that maps an $N$-point input signal
$\mathbf{x} = \begin{bmatrix}x[0] & x[1] & \cdots & x[N-1]\end{bmatrix}^\top$
into an $N$-point output signal
$\mathbf{X} = \begin{bmatrix}X[0] & X[1] & \cdots & X[N-1]\end{bmatrix}^\top$
according to the following relationship:
$\mathbf{X} = \mathbf{F}_N \cdot \mathbf{x}$,
where $\mathbf{F}_N$ is the DFT matrix, whose elements are given by $\omega_N^{kl}$, $k,l = 0, 1, \ldots, N-1$, where $\omega_N = \exp\left( - j \frac{2\pi}{N} \right)$ is the $N^{\rm th}$ root of unity and $j=\sqrt{-1}$.

Evaluated by means of direct matrix-vector multiplications, the direct computational complexity of the DFT is in $\mathcal{O}(N^2)$, with $N^2$ complex multiplications and $N(N-1)$ complex additions.
The DFT matrix has been studied for the last 50~years, and there exist a multitude of fast algorithms (collectively called fast Fourier transforms (FFTs)) that compute the DFT using $\mathcal{O}(N\log N)$ operations, which is significantly lower when compared with the direct implementations. The use of a spatial FFT  leads to $N$ independent orthogonal RF beams at $\mathcal{O}(N\log N)$ complexity. In fact, by taking a given FFT algorithm and implementing its ``Twiddle Factors'' (which are intermediate constant complex multiplications found in FFT algorithms) using microwave or analog IC-based phase-shifter implementations has led to the ``Butler Matrix'' type multi-beam array beamformers that are well known in the literature. However, such FFT beams suffer from frequency dependent beam directions. Known as ``beam squint'' because the beam directions are strongly dependent on the temporal frequency of operation, DFT based multi-beam beamformers can only be used for narrowband wireless systems.

The FFT is capable of computing the DFT or its inverse in $\mathcal{O}(N\log N)$ complexity. Therefore, FFT-based multi-beam beamformers are very useful for wireless systems having narrow bandwidth. However, for emerging 5G mmW systems that exploit increasingly wide bandwidths, the beam-squint problem can be significant. For emerging 5G mmW systems that fully exploit the available bandwidth for increasing system capacity, one must utilize the true time-delay based multi-beam beamformers described by its own delay Vandermonde matrix (DVM). The DVM, however, is equal to the DFT only at a single temporal frequency. Therefore, FFT-based factorizations are not applicable for the DVM matrix. In this paper, we describe the complexity of an FFT-like factorization algorithm for the Vandermonde matrices, in order to be able to implement truly wideband multi-beam mmW beamformers based on true-time-delay networks albeit at $\mathcal{O}(N\log  N)$ complexity.

The paper is organized as follows.
{Section} \ref{eetheory} {contains an introduction to complexity metrics of analog and digital parallel computation systems for matrix-vector products.}
Section~\ref{sec:fac} introduces novel self-contained factorizations for Vandermonde matrices and radix-2 algorithms, while in section~\ref{sec:com} we will derive arithmetic complexity and elaborate on numerical results based on the proposed algorithms for Vandermonde matrices.
{Next, section} \ref{sec:err} {analyzes error bounds and stability in computing radix-2 algorithms for Vandermonde matrices having true time-delays.}
In section \ref{sec:sfg} we will present signal flow graphs of the proposed radix-2 algorithm for Vandermonde matrices.
Finally, section~\ref{sec:con} concludes the paper.

\section{Analog Implementations for 5G and Beyond: Quantifying Complexity}
\label{eetheory}

Fast analog radio frequency (RF) integrated circuit (IC) realizations of the beamforming algorithms become necessary when the bandwidths of interest are greater than  a few GHz. For emerging 5G, 6G and beyond, the bandwidths of interest are too high for digital computing solutions to keep up. The solution is to replace digital systems with fast analog implementations of wideband beamforming algorithms, which in turn, requires a revisit to traditional algorithm complexity theory because of differences in analog parallel  architectures compared to conventional digital approaches.  In analog implementations,  the bandwidth effectively sets  the rate at which the analog computation can be updated.  The DVM building block employs true time delays that can be realized using transmission line segments and/or all-pass networks followed by amplification stages.

Let us define DVM fast algorithms as consisting of gain-delay-block (GDB) and addition/subtraction blocks.  Instead of computing the number of multiplications for accessing with arithmetic complexity (as one would do for digital systems), we need to count the number of parallel circuit implementations of GDBs in order to access the circuit complexity of analog parallel algorithms. The larger the number of GDBs, the higher the circuit complexity and hence higher chip area and power consumption.
In analog fast algorithms, the objective is to factorize the original matrix into products of sparse matrices, such that the total number of GDBs is reduced from $\mathcal{O}(N^2)$ to $\mathcal{O}(N\log N).$

{We remark here that the gain is not equivalent to the coefficient multiplication. Although a delay of $t$ is simply multiplication by $e^{-j\omega t}$ in the mathematical sense, it requires a separate true time delay circuit in the analog domain. Hence, the multiplication complexity is different from GBD counts.}

\section{Self-Contained Factorization and Algorithm for Vandermonde Matrices}
\label{sec:fac}
Low complexity and stable algorithms for the delay Vandermonde matrix, $\mathbf{A}_N=[\alpha^{kl}]_{k=1,l=0}^{N,N-1}$, where $\alpha=e^{-j \omega_t \tau}$ and accounts for the phase rotation associated with the delay $\tau$ at frequency $f$, and $\omega_t = 2\pi f$, have been derived through our previous work \cite{SVNA18, VS17, SAR19}. It is important to realize that the matrix elements are integer powers of  $\alpha=e^{-j \omega_t \tau}$ which are functions of the temporal frequency variable $\omega_t$; this is an important distinction from the DFT matrix  where the elements are constants defined as the primitive $N$th  roots of unity. Because integer powers of  $\alpha=e^{-j \omega_t \tau}$ are dependent on $\omega_t$ the DVM frequency responses are functions of two frequency variables: $\omega_x$, which is  typically a spatial variable, and $\omega_t$ which is typically the temporal frequency variable. The DVM matrix frequency responses are defined using the spatial frequency variable $\omega_x$ via 2-D filterbank responses that contain $\omega_t$ as a parameter, and given by the expression for the $k$th filter for $k=0,1,\ldots,N-1$ as $H_k(j\omega_x,j\omega_t)=\sum_i \alpha^{ki}e^{-j\omega_xi}, i=0,1,\ldots,N-1.$ Therefore,  considering both
$\omega_x$ and $\omega_t$ the DVM defines $N$ 2-D frequency responses.

Further, the DVM is the super-class of the DFT matrix without having nice properties like unitary, periodicity, symmetry, and circular shift.
There is no self-contained radix-2 DVM algorithm in the literature. The manuscript \cite{SVNA18} proposes a self-contained sparse factorization of DVM with $\mathcal{O}(N^2)$ arithmetic complexity. The displacement structure of Vandermonde-related matrices is used to derive $\mathcal{O}(N \log^2 N)$ arithmetic complexity algorithms in \cite{GO941, GO942} and an $\mathcal{O}(N)$ arithmetic complexity algorithm in \cite{P17}. The manuscripts \cite{OP00, OA04, Y05} propose $\mathcal{O}(N^2)$ complexity algorithms to compute Vandermonde matrices (having real nodes) by a vector.
The DVM algorithm in \cite{SVNA18} extends the results in \cite{OP00, OA04, Y05} utilizing complex nodes without using displacement equations as in \cite{GO941, GO942, P17}.
Moreover, we have addressed the error bounds and stability of the DVM algorithm in \cite{SVNA18} by filling the gaps in \cite{OP00, OA04, Y05}. The DVM algorithm in \cite{SAR19} is faster than \cite{SVNA18} but does not produce arithmetic complexity of order $\mathcal{O}(N\log  N)$. On the other hand, there are no constraints for nodes of DVM in \cite{SVNA18} as opposed to what we propose here.

In this section, we derive novel self-contained factorization for the Vandermode-type matrices and propose a radix-2 algorithm for the Vandermonde matrices. We will account for the phase rotation associated with delay and frequency in the factorization of Vandermonde matrices.

\subsection{Self-contained Factorization for Vandermonde Matrices}
\label{sub:van}
Algorithms operating on analog signals for computing Vandermonde matrix by a vector can be seen as the evaluation of $(N-1)^{\rm th}$ degree polynomial at $N$ points, albeit using a paralleled analog computing circuit as opposed to a digital realization that must operate on samples and quantized signals. Here we derive self-contained factorization of Vandermonde matrices to obtain efficient continuous-time algorithms for implementation on analog circuits while reducing GDB counts.

One can observe the computation of Vandermonde matrix by a vector with arithmetic complexity $\mathcal{O}(N {\rm} \log^2 N)$ in \cite{GO941, GO942, DHR97}. Here, arithmetic complexity refers to the  number of GDBs in an analog RF-IC circuit implementation, unlike the traditional approach of the number of multipliers and adders in a digital system. There are several mathematical techniques available to derive radix-2 and split-radix FFT algorithms, as described in~\cite{CT65, S86, VL92, JF07, RKH10}.
It has been shown in \cite{P16} that Vandermonde matrices are badly ill-conditioned with a narrow class of exceptions whereas cyclic sequences of nodes are equally spaced on the unit circle $C(0, 1)$.
In here, we propose self-contained and sparse factorization for the well-conditioned Vandermonde matrices and extend the results for $C(0, r)$, where $r > 1$ (i.e. circle of radius $r$ centered at the origin in the complex plane).  The proposed factorizations will then be used to derive fast algorithms while reducing GDB counts.

\begin{theorem}
\label{thm:van}
Let the Vandermonde matrix $\mathbf{V}_{N}=[v_{k}^l]_{k,l=0}^{N-1}$ be defined by equally spaced nodes $\{ v_0, v_1, \ldots, v_{N-1}\}$ on $C(0, 1)$ (in counterclockwise direction) and $N=2^t$ ($t \geq 1$).
Then $\mathbf{ V}_{N}(v_0, v_1, \ldots, v_{N-1})$ can be factored into
\begin{equation}
\begin{aligned}
\mathbf{V}_{N} =P_N^T &
\begin{bmatrix}
 \mathbf{ V}_{\frac{N}{2}}& \\
& \mathbf{ V}_{\frac{N}{2}}
\end{bmatrix}
\begin{bmatrix}
I_{\frac{N}{2}} & \\
& \dot{D}_{\frac{N}{2}}
\end{bmatrix}
\left[
\begin{array}{c|c}
 I_{\frac{N}{2}}&  I_{\frac{N}{2}}\\
\hline\\
 I_{\frac{N}{2}} & -I_{\frac{N}{2}}
\end{array}
\right]
\\ &
\begin{bmatrix}
I_{\frac{N}{2}} & \\
& c \cdot I_{\frac{N}{2}}
\end{bmatrix}
\end{aligned}
\label{eq:VM}
\end{equation}
where $P_N$ is the even-odd permutation matrix, $I_{\frac{N}{2}}$ is the identity matrix, $\dot{D}_{\frac{N}{2}}=\operatorname{diag}[e^{l(\frac{2\pi j}{N})}]_{l=0}^{\frac{N}{2}-1}$, $c=e^{\frac{j\theta N}{2}}$, and $0 \leq \theta < 2 \pi$.
\end{theorem}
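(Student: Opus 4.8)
The plan is to parametrize the equally spaced nodes explicitly and then verify the factorization by expanding the right-hand side into a single block matrix, reading off its entries, and matching them against $v_k^l$. Since the nodes lie on $C(0,1)$ in counterclockwise order, write $\zeta := e^{2\pi j/N}$ and $v_k = e^{j\theta}\zeta^{k}$ for $k=0,1,\ldots,N-1$, where $\theta\in[0,2\pi)$ is the common phase offset. I would first record three elementary facts that drive the whole argument: (i) the even-indexed nodes $v_{2m}=e^{j\theta}\zeta^{2m}$, $m=0,\ldots,\tfrac{N}{2}-1$, are themselves $\tfrac{N}{2}$ equally spaced nodes on $C(0,1)$ with the \emph{same} offset $\theta$, so they are exactly the nodes of the half-size matrix $\mathbf{V}_{\frac{N}{2}}$; (ii) consecutive nodes satisfy $v_{2m+1}=\zeta\,v_{2m}$, and $\zeta^{l}=[\dot{D}_{\frac{N}{2}}]_{ll}$; and (iii) since $\zeta^{N/2}=e^{\pi j}=-1$, one has $v_k^{N/2}=e^{j\theta N/2}\zeta^{kN/2}=c\,(-1)^k$, whence $v_{2m}^{N/2}=c$ and $v_{2m+1}^{N/2}=-c$.

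Next I would multiply out the four rightmost factors. Writing $G,\,C,\,D,\,E$ for the block-diagonal Vandermonde factor, the diagonal twiddle factor, the butterfly, and the scaling factor respectively, a direct computation gives
\[
G\,C\,D\,E=
\begin{bmatrix}
\mathbf{V}_{\frac{N}{2}} & c\,\mathbf{V}_{\frac{N}{2}}\\
\mathbf{V}_{\frac{N}{2}}\dot{D}_{\frac{N}{2}} & -c\,\mathbf{V}_{\frac{N}{2}}\dot{D}_{\frac{N}{2}}
\end{bmatrix}.
\]
It then remains to apply $P_N^{T}$ on the left. Since $P_N$ gathers the even-indexed entries ahead of the odd-indexed ones, its transpose \emph{scatters} the top block of rows onto the even rows of $\mathbf{V}_N$ and the bottom block onto the odd rows; thus row $2m$ of $\mathbf{V}_N$ is row $m$ of the top block, and row $2m+1$ is row $m$ of the bottom block.

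Finally I would verify the four resulting entry types, splitting each row at column $\tfrac{N}{2}$ by writing a column index either as $l$ (for $l<\tfrac{N}{2}$) or as $l=\tfrac{N}{2}+l'$. Using fact (i), the even rows reproduce $(\mathbf{V}_{\frac{N}{2}})_{m,l}=v_{2m}^{l}$ for $l<\tfrac{N}{2}$, while for the right half $c\,(\mathbf{V}_{\frac{N}{2}})_{m,l'}=v_{2m}^{N/2}\,v_{2m}^{l'}=v_{2m}^{N/2+l'}$ by fact (iii); hence the even row equals $[v_{2m}^{l}]_{l}$. For the odd rows, fact (ii) converts the twiddled block via $(\mathbf{V}_{\frac{N}{2}}\dot{D}_{\frac{N}{2}})_{m,l}=v_{2m}^{l}\zeta^{l}=(\zeta v_{2m})^{l}=v_{2m+1}^{l}$, and the right half contributes $-c\,v_{2m+1}^{l'}=v_{2m+1}^{N/2}\,v_{2m+1}^{l'}=v_{2m+1}^{N/2+l'}$, again by fact (iii); hence the odd row equals $[v_{2m+1}^{l}]_{l}$. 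Matching these against $[\mathbf{V}_N]_{k,l}=v_k^{l}$ establishes the identity.

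I expect the main obstacle to be bookkeeping rather than any deep step: getting the orientation of $P_N^{T}$ exactly right (that it scatters rather than gathers) and keeping the column split $l=\tfrac{N}{2}+l'$ aligned with the signs $\pm c$ coming from $v_k^{N/2}=c(-1)^k$. A useful sanity check along the way is the base case $N=2$, where $\mathbf{V}_{\frac{N}{2}}=\mathbf{V}_1=[1]$, $\dot{D}_1=[1]$, $P_2=I_2$, and $c=e^{j\theta}$, so the formula collapses to the butterfly times $\operatorname{diag}(1,e^{j\theta})$, which is manifestly $\mathbf{V}_2$. Because the even-indexed nodes retain the offset $\theta$, the same identity applies verbatim to $\mathbf{V}_{\frac{N}{2}}$, so the factorization unfolds self-recursively down to $\mathbf{V}_1$, which is precisely the structure the radix-2 algorithm exploits.
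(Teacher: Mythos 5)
Your proof is correct and follows essentially the same route as the paper: permute into even/odd rows, split the columns at $\tfrac{N}{2}$, and identify the four blocks as $\mathbf{V}_{\frac{N}{2}}$, $c\,\mathbf{V}_{\frac{N}{2}}$, $\mathbf{V}_{\frac{N}{2}}\dot{D}_{\frac{N}{2}}$, and $-c\,\mathbf{V}_{\frac{N}{2}}\dot{D}_{\frac{N}{2}}$ using $v_{2m+1}=e^{2\pi j/N}v_{2m}$ and $v_k^{N/2}=c(-1)^k$. The only real difference is that you read off $\operatorname{diag}\bigl[v_{2k}^{N/2}\bigr]_{k}=c\,I_{\frac{N}{2}}$ directly from the parametrization $v_{2k}=e^{j(\theta+4\pi k/N)}$, whereas the paper reaches the same conclusion through a longer detour computing $\tfrac{2}{N}\mathbf{V}^H_{\frac{N}{2}}\,D_{\frac{N}{2}}\,\mathbf{V}_{\frac{N}{2}}$ via the (scaled) unitarity of $\mathbf{V}_{\frac{N}{2}}$; your shortcut is legitimate and arguably cleaner.
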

\begin{proof}
Let us permute rows of $\mathbf{ V}_{N}$ by multiplying with $P_N$ and write the result as the block matrices:
\begin{equation}
\begin{aligned}
P_N\mathbf{ V}_{N}&=\\
&
\left[
\begin{array}{c|c}
\left[v_{2k}^{l}\right]_{k,l=0}^{\frac{N}{2}-1} & \left[v_{2k}^{\left(\frac{N}{2}+l\right)}\right]_{k,l=0}^{\frac{N}{2}-1} \\
\hline
\left[v_{2k+1}^{l}\right]_{k,l=0}^{\frac{N}{2}-1}& \left[v_{2k+1}^{\left(\frac{N}{2}+l\right)}\right]_{k,l=0}^{\frac{N}{2}-1}
\end{array}
\right]
\end{aligned}
\label{v1eq}
\end{equation}
It is clear that the (1,1) block of the product $P_N\mathbf{ V}_{N}$ is $\mathbf{ V}_{\frac{N}{2}}$. Now, we consider (1,2), (2,1), and (2,2) blocks of $P_N\mathbf{ V}_{N}$ (\ref{v1eq}) and represent each of these by $\mathbf{ V}_{\frac{N}{2}}$ and the product of diagonal matrices.
\\
By (1,2) block of \eqref{v1eq} we get:
\begin{equation}
 \left[v_{2k}^{\left(\frac{N}{2}+l\right)}\right]_{k,l=0}^{\frac{N}{2}-1} =\operatorname{diag}\left[v_{2k}^{\frac{N}{2}}\right]_{k=0}^{\frac{N}{2}-1} \cdot \left[v_{2k}^{l}\right]_{k,l=0}^{\frac{N}{2}-1}
\label{v1eq12}
\end{equation}
Since nodes are equally spaced on $C(0,1)$, we have $v_{2k+1}=v_{2k} \cdot e^{\frac{2\pi j}{N}}$,
for $k=0, 1, \ldots, {\frac{N}{2}-1}$. Now by (2,1) block of \eqref{v1eq} we get:
\begin{equation}
\left[v_{2k+1}^{l}\right]_{k,l=0}^{\frac{N}{2}-1}=\left[v_{2k}^{l}\right]_{k,l=0}^{\frac{N}{2}-1}\cdot \operatorname{diag}[e^{l(\frac{2\pi j}{N})}]_{l=0}^{\frac{N}{2}-1}
\label{v1eq21}
\end{equation}
By (2,2) block of \eqref{v1eq} we get:
\begin{equation}
\begin{aligned}
\left[v_{2k+1}^{\left(\frac{N}{2}+l\right)}\right]_{k,l=0}^{\frac{N}{2}-1} & = -\operatorname{diag}\left[v_{2k}^{\frac{N}{2}}\right]_{k=0}^{\frac{N}{2}-1} \cdot \left[v_{2k}^{l}\right]_{k,l=0}^{\frac{N}{2}-1}\cdot
\\  & \hspace{.2in}
\operatorname{diag}[e^{l(\frac{2\pi j}{N})}]_{l=0}^{\frac{N}{2}-1}
\end{aligned}
\label{v1eq22}
\end{equation}
Thus by \eqref{v1eq12}, \eqref{v1eq21}, and \eqref{v1eq22},
we can state \eqref{v1eq} as:
\begin{equation}
P_N\mathbf{ V}_{N}=\left[
\begin{array}{c|c}
 \mathbf{ V}_{\frac{N}{2}} &  { D}_{\frac{N}{2}} \cdot \mathbf{ V}_{\frac{N}{2}}\\
\hline\\
\mathbf{ V}_{\frac{N}{2}}\cdot \dot{D}_{\frac{N}{2}} & - { D}_{\frac{N}{2}} \cdot \mathbf{ V}_{\frac{N}{2}}\cdot \dot{D}_{\frac{N}{2}}
\end{array}
\right]
\label{v2eq}
\end{equation}
where ${D}_{\frac{N}{2}} =\operatorname{diag}\left[v_{2k}^{\frac{N}{2}}\right]_{k=0}^{\frac{N}{2}-1}$. Let us consider the product of $m$th row of $\mathbf{ V}_{N}$ and $l$th column of $\mathbf{ V}^H_{N}$, where $\mathbf{ V}^H_{N}$ is the conjugate transpose of $\mathbf{ V}_{N}$.
Thus, we have:
\begin{equation}
\begin{aligned}
\mathbf{ V}_{N}(m,:) & \cdot \mathbf{ V}^H_{N}(:,l)
\\
&=
1+v_{m-1}\bar{v}_{l-1}+v_{m-1}^{2} \bar{v}_{l-1}^{2}+ \cdots +v_{m-1}^{(N-1)} \bar{v}_{l-1}^{(N-1)}
\\
& =
\begin{cases}
N, \text{when $m=l$,} \\
0, \text{when $m \neq l$,}
\end{cases}
\end{aligned}
\nonumber
\end{equation}
In the above, the first equality follows as $v_k, \bar{v}_k \in C(0, 1)$ for $k =0, 1, \ldots, N-1$ and the second equality follows as $v_{2k+1}=v_{2k} \cdot e^{\frac{2\pi j}{N}}$. Hence, $\mathbf{ V}_{N}$ is unitary up to scaling by $\frac{1}{\sqrt{N}}$. By using this we can state \eqref{v2eq} as:
\begin{equation}
\begin{aligned}
P_N\mathbf{ V}_{N} &=
\begin{bmatrix}
 \mathbf{ V}_{\frac{N}{2}} & \\
&  \mathbf{ V}_{\frac{N}{2}}
\end{bmatrix}
\\  & \hspace{.2in}
\left[
\begin{array}{c|c}
 \mathbf{ I}_{\frac{N}{2}} &   \frac{2}{N}\cdot\mathbf{ V}^H_{\frac{N}{2}}\cdot { D}_{\frac{N}{2}} \cdot \mathbf{ V}_{\frac{N}{2}}\\
\hline\\
\dot{D}_{\frac{N}{2}} & -  \frac{2}{N}\cdot\mathbf{ V}^H_{\frac{N}{2}} \cdot { D}_{\frac{N}{2}} \cdot \mathbf{ V}_{\frac{N}{2}}\cdot \dot{D}_{\frac{N}{2}}
\end{array}
\right]
\end{aligned}
\label{v3eqn}
\end{equation}
Now let us consider the product $\mathbf{ V}^H_{\frac{N}{2}}\cdot {D}_{\frac{N}{2}} \cdot \mathbf{ V}_{\frac{N}{2}}$ i.e. the product of $m$th row of $\mathbf{ V}^H_{\frac{N}{2}}\cdot { D}_{\frac{N}{2}}$-say $\mathbf{ \hat{V}}_{\frac{N}{2}}$ and $l$th column of $\mathbf{ V}_{\frac{N}{2}}$.
Therefore, we have that
\begin{equation}
\begin{aligned}
& \mathbf{ \hat{V}}_{\frac{N}{2}}(m,:) \cdot  \mathbf{ V}_{\frac{N}{2}}(:,l)
\\
&=
\bar{v}_{0}^{m-1} v_0^{\frac{N}{2}}{v}_{0}^{l-1}+\bar{v}_{2}^{m-1} v_2^{\frac{N}{2}}{v}_{2}^{l-1}+\bar{v}_{4}^{m-1} v_4^{\frac{N}{2}}{z}_{4}^{l-1}
\\ & \hspace{.3in}
+ \cdots +\bar{v}_{N-2}^{m-1} v_{N-2}^{\frac{N}{2}}{v}_{N-2}^{l-1}
\\
& =
\begin{cases}
\displaystyle\sum_{k=0}^{\frac{N}{2}-1}v_{2k}^{\frac{N}{2}},
& \text{when $m=l$,}
\\
0, & \text{when $m \neq l$,}
\end{cases}
\end{aligned}
\nonumber
\end{equation}
In the above, the first equality follows as $v_{2k}, \bar{v}_{2k} \in C(0, 1)$ and the second equality follows as $v_{2k}$ are nodes of $\mathbf{ V}_{\frac{N}{2}}$ and $v_{2k+2}=v_{2k} \cdot e^{\frac{4\pi j}{N}}$. Thus, by following the above one can see the $(m, l)$ entry of $\mathbf{ \hat{V}}_{\frac{N}{2}} \cdot \mathbf{ V}_{\frac{N}{2}}\cdot \dot{D}_{\frac{N}{2}}$ as
\begin{equation}
\begin{aligned}
\text{$(m, l)$ entry of\ }
&\mathbf{ \hat{V}}_{\frac{N}{2}} \cdot \mathbf{ V}_{\frac{N}{2}}\cdot \dot{D}_{\frac{N}{2}}
\\ &
=
\begin{cases}
\left(\displaystyle\sum_{k=0}^{\frac{N}{2}-1}v_{2k}^{\frac{N}{2}}\right)e^{l(\frac{2\pi j}{N})},
\text{when $m=l$,}
\\
0,
\text{when $m \neq l$.}
\end{cases}
\end{aligned}
\nonumber
\end{equation}
Notice that even nodes on $C(0,1)$ can be expressed as $v_{2k} = e^{j\left(\theta+\frac{4\pi k}{N}\right)}$ for $k = 0, 1, \ldots, \frac{N}{2}-1$. Thus, by raising each even node to the power of $\frac{N}{2}$ and taking average we get $c=e^{\frac{j\theta N}{2}}$ where $j^2=-1$. Hence,
\begin{equation}
\mathbf{ V}_{N}=P_N^T
\begin{bmatrix}
 \mathbf{ V}_{\frac{N}{2}}& \\
& \mathbf{ V}_{\frac{N}{2}}
\end{bmatrix}
\left[
\begin{array}{c|c}
 I_{\frac{N}{2}}&  c \cdot I_{\frac{N}{2}}\\
\hline\\
\dot{D}_{\frac{N}{2}} & - c \cdot \dot{D}_{\frac{N}{2}}
\end{array}
\right]
\label{leq}
\end{equation}
and the claim of the theorem follows.
\end{proof}

\begin{remark}
\label{rleq}
The last matrix in the factorization \eqref{leq} has been split into three sparse matrices in \eqref{eq:VM} to reduce the multiplication counts and hence for efficient hardware implementation.
\end{remark}
\begin{corollary}
\label{cthm:van}
Let the Vandermonde matrix $\mathbf{ \tilde{V}}_{N}=[\tilde{v}_{k}^l]_{k,l=0}^{N-1}$ be defined by equally spaced nodes $\{ \tilde{v}_0, \tilde{v}_1, \ldots, \tilde{v}_{N-1}\}$ on $C(0, r)$, where $r >1$ (in counterclockwise direction) and $N=2^t$ ($t \geq 1$). Then $\mathbf{ \tilde{V}}_{N}(\tilde{v}_0, \tilde{v}_1, \ldots, \tilde{v}_{N-1})$ can be factored into
\begin{equation}
\mathbf{ \tilde{V}}_{N}=\mathbf{ V}_N \tilde{\mathbf{D}}_N
\label{ceq:VM}
\end{equation}
where $\tilde{\mathbf{D}}_N=\operatorname{diag} [r^{l}]_{l=0}^{N-1}$ and $\mathbf{ V}_N$ is defined via \eqref{eq:VM}.
\end{corollary}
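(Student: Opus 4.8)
The plan is to exploit the fact that equally spaced nodes on $C(0,r)$ are merely radially scaled copies of equally spaced nodes on the unit circle. Concretely, if the nodes $\tilde{v}_k$ on $C(0,r)$ are written as $\tilde{v}_k = r e^{j(\theta + 4\pi k/(2N))}$, then setting $v_k = e^{j(\theta + 2\pi k/N)}$ produces equally spaced nodes on $C(0,1)$ with the same angular offset $\theta$ and $\tilde{v}_k = r\, v_k$ for every $k$. The matrix $\mathbf{V}_N$ assembled from these $v_k$ is precisely the one to which Theorem~\ref{thm:van} applies, so it already carries the radix-2 factorization \eqref{eq:VM}.

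Next I would compute the generic entry of $\mathbf{\tilde{V}}_N$ directly. Since its $(k,l)$ entry is $\tilde{v}_k^{\,l} = (r v_k)^l = r^l v_k^l$, the radius contributes a factor $r^l$ depending only on the column index $l$. This is exactly the effect of right-multiplying $\mathbf{V}_N = [v_k^l]_{k,l=0}^{N-1}$ by the diagonal matrix $\tilde{\mathbf{D}}_N = \operatorname{diag}[r^l]_{l=0}^{N-1}$, because right multiplication by a diagonal matrix rescales the $l$th column of $\mathbf{V}_N$ by its $l$th diagonal entry. Hence $\mathbf{\tilde{V}}_N = \mathbf{V}_N \tilde{\mathbf{D}}_N$ entrywise, which is the claimed identity \eqref{ceq:VM}; substituting \eqref{eq:VM} for $\mathbf{V}_N$ then endows $\mathbf{\tilde{V}}_N$ with the same sparse, self-recursive structure up to the single extra diagonal factor.

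There is essentially no serious obstacle here, since the result is a change-of-variable observation rather than a computation. The only point warranting a line of care is verifying that the correspondence between unit-circle nodes and radius-$r$ nodes is the uniform scaling $v_k \mapsto r v_k$ rather than some angle-dependent deformation; this follows immediately from both node sets being equally spaced in the counterclockwise direction with the identical offset $\theta$. The key structural feature to emphasize is that the scaling factor $r^l$ depends solely on the column index, which is exactly what makes the correction a clean diagonal matrix $\tilde{\mathbf{D}}_N$ that commutes past the recursion without disturbing the $\mathcal{O}(N\log N)$ GDB count.
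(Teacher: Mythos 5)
Your proof is correct and follows exactly the paper's reasoning: the paper dispatches this corollary in one line by noting $\tilde{v}_k = r\cdot v_k$, and your entrywise computation $\tilde{v}_k^{\,l} = r^l v_k^l$ together with the column-scaling interpretation of right multiplication by $\tilde{\mathbf{D}}_N$ is simply that same observation written out in full. No gaps.
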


\begin{proof}
This is trivial as $\tilde{v}_k=r \cdot v_k$  for $k=0, 1, \ldots, N-1$.
\end{proof}

The following self-contained factorization for the Vandermonde matrices is proposed in connection to the phase rotation associated with delay $\tau$ and frequency $\omega_t=2\pi f$.

\begin{theorem}
\label{thm:van2}
Let the Vandermonde matrix $\mathbf{ V}_{N}=[v_{k}^l]_{k,l=0}^{N-1}$ be defined by equally spaced nodes $\{ v_0, v_1, \ldots, v_{N-1}\}$ on $C(0, 1)$ (in clockwise direction) and $N=2^t$ ($t \geq 1$). Then $\mathbf{ V}_{N}(v_0, v_1, \ldots, v_{N-1})$ can be factored into
\begin{equation}
\begin{aligned}
\mathbf{ V}_{N}=P_N^T &
\begin{bmatrix}
 \mathbf{ V}_{\frac{N}{2}}& \\
& \mathbf{ V}_{\frac{N}{2}}
\end{bmatrix}
\begin{bmatrix}
I_{\frac{N}{2}} & \\
& \bar{\dot{D}}_{\frac{N}{2}}
\end{bmatrix}
\left[
\begin{array}{c|c}
 I_{\frac{N}{2}}&  I_{\frac{N}{2}}\\
\hline\\
 I_{\frac{N}{2}} & -I_{\frac{N}{2}}
\end{array}
\right]
\\ &
\begin{bmatrix}
I_{\frac{N}{2}} & \\
& \bar{c} \cdot I_{\frac{N}{2}}
\end{bmatrix}
\end{aligned}
\label{eq:cVM}
\end{equation}
where $I_{\frac{N}{2}}$ is the identity matrix, $\bar{\dot{D}}_{\frac{N}{2}}=\operatorname{diag}[e^{-l(\frac{2\pi j}{N})}]_{l=0}^{\frac{N}{2}-1}$, $\bar{c}=e^{-\frac{j\theta N}{2}}$, and  $\theta=2 \pi f \tau = \omega_t\tau$,  s.t. $0 \leq \theta < 2 \pi$ .
\end{theorem}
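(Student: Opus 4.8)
The plan is to recognize that the clockwise configuration is nothing more than the complex conjugate of the counterclockwise configuration already settled in Theorem~\ref{thm:van}, and to obtain \eqref{eq:cVM} by conjugating \eqref{eq:VM} entrywise. Concretely, if $\{v_0,\dots,v_{N-1}\}$ are equally spaced in the \emph{counterclockwise} direction on $C(0,1)$, then their complex conjugates $\{\bar v_0,\dots,\bar v_{N-1}\}$ are equally spaced in the \emph{clockwise} direction; writing $v_k = e^{j(\theta + 2\pi k/N)}$ exhibits this reflection explicitly. Hence the clockwise Vandermonde matrix is the entrywise conjugate of the counterclockwise one, $\mathbf{V}_N^{\mathrm{cw}} = \overline{\mathbf{V}_N^{\mathrm{ccw}}}$, and the same relation holds one level down for the half-size blocks $\mathbf{V}_{N/2}$, so the recursion is preserved under conjugation.

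First I would take the entrywise complex conjugate of the factorization \eqref{eq:VM}. The permutation $P_N^T$, the identity blocks $I_{N/2}$, and the butterfly block $\left[\begin{smallmatrix} I & I \\ I & -I\end{smallmatrix}\right]$ are all real and therefore invariant under conjugation. The diagonal factor transforms as $\dot D_{N/2} = \operatorname{diag}[e^{l(2\pi j/N)}] \mapsto \operatorname{diag}[e^{-l(2\pi j/N)}] = \bar{\dot D}_{N/2}$, the scalar transforms as $c = e^{j\theta N/2}\mapsto e^{-j\theta N/2} = \bar c$, and each $\mathbf{V}_{N/2}$ becomes its conjugate, i.e. the half-size clockwise Vandermonde. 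Collecting these factors yields precisely \eqref{eq:cVM}, and relabeling the free parameter as $\theta = \omega_t\tau$ (permissible, since $0\le\theta<2\pi$ is the only constraint used) completes the argument.

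For completeness, and to keep the proof self-contained rather than leaning on conjugation, I would alternatively mirror the proof of Theorem~\ref{thm:van} line by line. The only substantive change is the node recurrence: clockwise spacing gives $v_{2k+1} = v_{2k}\,e^{-2\pi j/N}$, which replaces $\dot D_{N/2}$ by $\bar{\dot D}_{N/2}$ in the analogues of \eqref{v1eq21} and \eqref{v1eq22}; the unitarity-up-to-scaling of $\mathbf{V}_{N/2}$ is unaffected, since the nodes still lie on $C(0,1)$. The step I expect to be the main obstacle is the evaluation of the scalar: one must show that $\tfrac{2}{N}\,\mathbf{V}^H_{N/2} D_{N/2}\mathbf{V}_{N/2}$ collapses to $\bar c\,I_{N/2}$. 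Writing the even nodes as $v_{2k}=e^{-j(\theta + 4\pi k/N)}$ and raising to the power $N/2$ gives $v_{2k}^{N/2} = e^{-j\theta N/2}\,e^{-2\pi j k} = e^{-j\theta N/2}$, which is independent of $k$; averaging therefore returns the constant $\bar c = e^{-j\theta N/2}$, exactly as needed. Finally, splitting the last factor into three sparse matrices, identically to Remark~\ref{rleq}, produces the displayed form \eqref{eq:cVM}.
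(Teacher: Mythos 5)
Your proposal is correct. The paper's own proof of Theorem~\ref{thm:van2} is a single sentence stating that the argument ``follows similar lines as that of Theorem~\ref{thm:van}'' with $\bar{\dot{D}}_{\frac{N}{2}}$ and $\bar{c}$ in place of $\dot{D}_{\frac{N}{2}}$ and $c$ --- which is exactly your second, line-by-line mirrored argument, and you correctly isolate the only two places where the orientation actually enters: the node recurrence $v_{2k+1}=v_{2k}\,e^{-2\pi j/N}$ and the evaluation $v_{2k}^{N/2}=e^{-j\theta N/2}$, independent of $k$. Your first argument --- obtaining \eqref{eq:cVM} by entrywise conjugation of \eqref{eq:VM} --- is a genuinely different and shorter route that the paper does not take: since $P_N^T$ and the butterfly $\hat{\mathbf{I}}_N$ are real, entrywise conjugation is multiplicative over the matrix product, and the conjugate of a counterclockwise equispaced configuration on $C(0,1)$ is a clockwise one at every level of the recursion, so the entire factorization transports in one step. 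What the conjugation route buys is that none of the orthogonality computations (unitarity of $\mathbf{V}_N$ up to scaling, the collapse of $\tfrac{2}{N}\mathbf{V}^H_{N/2}D_{N/2}\mathbf{V}_{N/2}$ to a scalar) need to be redone; what the paper's route (and your fallback) buys is a self-contained derivation in which the signs in $\bar{\dot{D}}_{\frac{N}{2}}$ and $\bar{c}$ are produced explicitly rather than inherited. Either version is a complete and valid proof.
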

\begin{proof}
The proof follows similar lines as that of Theorem \ref{thm:van}, except  $\bar{\dot{D}}_{\frac{N}{2}}=\operatorname{diag}[e^{-l(\frac{2\pi j}{N})}]_{l=0}^{\frac{N}{2}-1}$ instead of $\dot{D}_{\frac{N}{2}}=\operatorname{diag}[e^{l(\frac{2\pi j}{N})}]_{l=0}^{\frac{N}{2}-1}$ and $\bar{c}$ instead of $c$.
\end{proof}
\begin{remark}
Theorem \ref{thm:van2} has proposed a self-contained factorization, as opposed to a scaled DFT matrix. If one chooses to scale DFT matrices to factor $\mathbf{ V}_{N}$, it results in the computation of small complex numbers and leads to zero matrices \cite{H96}. The proposed factorization for $\mathbf{ V}_{N}$ in \eqref{eq:cVM} overcomes this barrier.
\end{remark}

\begin{corollary}
\label{cthm:van2}
Let the Vandermonde matrix $\mathbf{ \tilde{V}}_{N}=[\tilde{v}_{k}^l]_{k,l=0}^{N-1}$ be defined by equally spaced nodes $\{ \tilde{v}_0, \tilde{v}_1, \ldots, \tilde{v}_{N-1}\}$ on $C(0, r)$, where $r >1$ (in clockwise direction) and $N=2^t$ ($t \geq 1$). Then $\mathbf{ \tilde{V}}_{N}(\tilde{v}_0, \tilde{v}_1, \ldots, \tilde{v}_{N-1})$ can be factored into
\begin{equation}
\mathbf{ \tilde{V}}_{N}=\mathbf{ V}_N \tilde{\mathbf{D}}_N
\label{ceq:VM2}
\end{equation}
where $\tilde{\mathbf{D}}_N=\operatorname{diag} [r^{l}]_{l=0}^{N-1}$ and $\mathbf{ V}_N$ is defined via \eqref{eq:cVM}.
\end{corollary}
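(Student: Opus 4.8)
The plan is to reduce this to Theorem \ref{thm:van2} by exploiting the radial scaling that relates the nodes on $C(0,r)$ to those on $C(0,1)$. Since the nodes $\{\tilde{v}_0, \ldots, \tilde{v}_{N-1}\}$ are equally spaced on $C(0,r)$ in the clockwise direction, they are obtained from the corresponding equally spaced nodes $\{v_0, \ldots, v_{N-1}\}$ on the unit circle (with the same orientation and starting phase) simply by the dilation $\tilde{v}_k = r \cdot v_k$ for $k = 0, 1, \ldots, N-1$.

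First I would examine the $(k,l)$ entry of $\mathbf{\tilde{V}}_N$. Substituting $\tilde{v}_k = r v_k$ gives $\tilde{v}_k^l = (r v_k)^l = r^l v_k^l$, which is precisely the $(k,l)$ entry $v_k^l$ of $\mathbf{V}_N$ scaled by the column-dependent factor $r^l$. Collecting these factors into a diagonal matrix acting on the right, this reads entrywise as $\mathbf{\tilde{V}}_N = \mathbf{V}_N \tilde{\mathbf{D}}_N$ with $\tilde{\mathbf{D}}_N = \operatorname{diag}[r^l]_{l=0}^{N-1}$, since right-multiplication by a diagonal matrix scales the $l$th column by its $l$th diagonal entry.

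Finally, because the $v_k$ are equally spaced nodes on $C(0,1)$ in the clockwise direction with $N = 2^t$, Theorem \ref{thm:van2} applies directly and supplies the radix-2 sparse factorization \eqref{eq:cVM} for $\mathbf{V}_N$. Substituting that factorization into $\mathbf{\tilde{V}}_N = \mathbf{V}_N \tilde{\mathbf{D}}_N$ yields the claimed expression.

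The only point requiring any care---and hence the closest thing to an obstacle---is the bookkeeping of orientation and phase offset: one must verify that the clockwise, equally spaced nodes on $C(0,r)$ correspond under division by $r$ to exactly the clockwise, equally spaced nodes on $C(0,1)$ used in Theorem \ref{thm:van2}, so that the same phase $\theta = \omega_t \tau$ governs both factorizations. Once this node identification is fixed the result is immediate, with no further computation needed, mirroring the argument already used for Corollary \ref{cthm:van}.
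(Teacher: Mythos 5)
Your proof is correct and follows exactly the paper's (one-line) argument: the paper also reduces the claim to the identity $\tilde{v}_k = r\cdot v_k$, which gives $\tilde{v}_k^l = r^l v_k^l$ and hence the right-multiplication by $\tilde{\mathbf{D}}_N$, with Theorem \ref{thm:van2} supplying the factorization of $\mathbf{V}_N$. Your additional remark about checking that the scaled nodes retain the clockwise orientation and phase $\theta$ is a sensible bit of bookkeeping that the paper leaves implicit.
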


\begin{proof}
This is trivial as $\tilde{v}_k=r \cdot v_k$ for $k=0, 1, \ldots, N-1$.
\end{proof}

\begin{remark}
\label{cDFT}
When $\theta=0$ and $r=1$, the proposed factorization for the Vandermode matrices given in Theorem \ref{thm:van2}, reduces to the well known self-contained DFT matrix factorization \cite{CT65, Y68, VL92, S94}. Thus, we can use this property to define a delay Vandermonde matrix to solve the beam squint problem as well as allow high-speed analog realizations for future high bandwidth applications where the slowing down of Moore's law prevents the adoption of digital parallel processing architectures.
\end{remark}

\subsection{Self-recursive Algorithms for Vandermonde Matrices}
\label{sub:algovan}
In the following, we will state self-recursive radix-2 algorithms for Vandermonde matrices with the help of the Theorem \ref{thm:van}, Theorem \ref{thm:van2}, Corollary \ref{cthm:van} and Corollary \ref{cthm:van2}. Let us call the corresponding algorithms $\bf{vanc(N)}$, $\bf{vancc(N)}$, $\bf{vancr(N)}$, and $\bf{vanccr(N)}$ respectively, e.g., the acronym $\bf{vancr(N)}$ was selected to refer to the factorization for the Vandemode matrices having clockwise nodes on the circle of radius $r$. We use the following notation for the inputs of the algorithms i.e. $N$ for the size of the matrices, $\theta$, where $0 \leq \theta < 2\pi$, for the angle of rotation from the positive real axis (positive or negative based on counterclockwise or clockwise direction), $r$ for the magnitude, and $\mathbf{z}$ for the input vector.

Before stating algorithms, let us use the following notation to denote sparse matrices which will be used hereafter for $N \geq 4$.

\begin{equation}
\begin{matrix}
\hat{\mathbf{D}}_N=\begin{bmatrix}
I_{\frac{N}{2}} & \\
& \dot{D}_{\frac{N}{2}}
\end{bmatrix},\:\:\:  \check{\mathbf{D}}_N=\begin{bmatrix}
I_{\frac{N}{2}} & \\
& \bar{\dot{D}}_{\frac{N}{2}}
\end{bmatrix}
\\\\
\hat{\mathbf{I}}_N= \left[
\begin{array}{c|c}
 I_{\frac{N}{2}}&  I_{\frac{N}{2}}\\
\hline\\
 I_{\frac{N}{2}} & -I_{\frac{N}{2}}
\end{array}
\right],\\\\
\mathbf{C}_N=\begin{bmatrix}
I_{\frac{N}{2}} & \\
& c \cdot I_{\frac{N}{2}}
\end{bmatrix}, {\rm and}\:\:\: \bar{\mathbf{C}}_N=\begin{bmatrix}
I_{\frac{N}{2}} & \\
& \bar{c} \cdot I_{\frac{N}{2}}
\end{bmatrix}
\end{matrix}
\label{feqa}
\end{equation}

\begin{algorithm} $\mathbf{vancc(z, N)}$\\
Input: $N = 2^t$ ($t \geq 1$), $N_1=\frac{N}{2}$, $\theta$, and $\mathbf{ z} \in \mathbb{R}^n {\:\:\rm or\:\:} \mathbb{C}^n$.
\begin{enumerate}
\item If $N=2$, then \\
\hspace{.1in} $\mathbf{ y}=\begin{bmatrix}
1 & e^{j\theta}\\
1 & -e^{j\theta}
\end{bmatrix} \mathbf{ z}.$
\item If $N \geq 4$, then \\
 \hspace{.1in} $\mathbf{ u}:=\mathbf{ C}_N \mathbf{ z}$,\\
 \hspace{.1in} $\mathbf{ v}:= \hat{\mathbf{ I}}_N \mathbf{ u}$,\\
 \hspace{.1in} $\mathbf{ w}:= \hat{\mathbf{ D}}_N \mathbf{ v}$,\\
 \hspace{.1in} $\mathbf{ s1}:=\mathbf{ vancc} \left(\left[w_i \right]_{i=0}^{N_1-1}, N_1 \right)$,\\
 \hspace{.1in} $\mathbf{ s2}:=\mathbf{ vancc} \left(\left[w_i \right]_{i=N_1}^{N}, N_1 \right)$,\\
 \hspace{.1in} $\mathbf{ y}:=\mathbf{ P}_N^T \left(\mathbf{ s1}^T, \mathbf{ s2}^T \right)^T$.
\end{enumerate}
Output: $\mathbf{ y}=\mathbf{ V}_{N}\mathbf{ z}$.
\end{algorithm}

\begin{algorithm} $\mathbf{vanc(z, N)}$\\
Input: $N = 2^t$ ($t \geq 1$), $N_1=\frac{N}{2}$, $\theta$, and $\mathbf{ z} \in \mathbb{R}^n {\:\:\rm or\:\:} \mathbb{C}^n$.
\begin{enumerate}
\item If $N=2$, then \\
\hspace{.1in} $\mathbf{ y}=\begin{bmatrix}
1 & e^{-j\theta}\\
1 & -e^{-j\theta}
\end{bmatrix} \mathbf{ z}.$
\item If $N \geq 4$, then \\
\hspace{.1in} $\mathbf{ u}:=\mathbf{ \bar{C}}_N \mathbf{ z}$,\\
 \hspace{.1in} $\mathbf{ v}:= \hat{\mathbf{ I}}_N \mathbf{ u}$,\\
 \hspace{.1in} $\mathbf{ w}:= \check{\mathbf{ D}}_N \mathbf{ v}$,\\
  \hspace{.1in} $\mathbf{ s1}:=\mathbf{ vanc} \left(\left[w_i \right]_{i=0}^{N_1-1}, N_1 \right)$,\\
 \hspace{.1in} $\mathbf{ s2}:=\mathbf{ vanc} \left(\left[w_i \right]_{i=N_1}^{N}, N_1 \right)$,\\
 \hspace{.1in} $\mathbf{ y}:=\mathbf{ P}_N^T \left(\mathbf{ s1}^T, \mathbf{ s2}^T \right)^T$.
\end{enumerate}
Output: $\mathbf{ y}=\mathbf{ V}_{N}\mathbf{ z}$.
\end{algorithm}

\begin{algorithm} $\mathbf{vanccr(z, N)}$\\
Input: $N = 2^t$ ($t \geq 1$), $N_1=\frac{N}{2}$,  $r$, $\theta$, and $\mathbf{ z} \in \mathbb{R}^n {\:\:\rm or\:\:} \mathbb{C}^n$.
\begin{enumerate}
\item If $N=2$, then \\
\hspace{.1in} $\mathbf{ y}=\begin{bmatrix}
1 & re^{j\theta}\\
1 & -re^{j\theta}
\end{bmatrix} \mathbf{ z}.$
\item If $N \geq 4$, then \\
 \hspace{.1in} $\mathbf{ u}:=\mathbf{ \tilde{D}}_N \mathbf{ z}$,\\
 \hspace{.1in} $\mathbf{ y}:= \mathbf{ vancc} \left(\left[u_i \right]_{i=0}^{N-1}, N \right)$.
\end{enumerate}
Output: $\mathbf{ y}=\mathbf{ \tilde{V}}_{N}\mathbf{ z}$.
\end{algorithm}

\begin{algorithm} $\mathbf{vancr(z, N)}$\\
Input: $N = 2^t$ ($t \geq 1$), $N_1=\frac{N}{2}$, $r$, $\theta$, and $\mathbf{ z} \in \mathbb{R}^n {\:\:\rm or\:\:} \mathbb{C}^n$.
\begin{enumerate}
\item If $N=2$, then \\
\hspace{.1in} $\mathbf{ y}=\left[\begin{array}{ll}
1 & re^{-j\theta}\\
1 & -re^{-j\theta}
\end{array}\right] \mathbf{ z}.$
\item If $N \geq 4$, then \\
\hspace{.1in} $\mathbf{ u}:=\mathbf{ \tilde{D}}_N \mathbf{ z}$,\\
 \hspace{.1in} $\mathbf{ y}:=\mathbf{ vanc} \left(\left[u_i \right]_{i=0}^{N-1}, N \right)$.
\end{enumerate}
Output: $\mathbf{ y}=\mathbf{ \tilde{V}}_{N}\mathbf{ z}$.
\end{algorithm}

\section{Analog GDB-Complexity}
\label{sec:com}
The number of additions and multiplications required to carry out a computation is called the arithmetic complexity in a digital computing system. Here, because our intention is to realize these algorithms as high-speed analog computing circuits operating at RF, we use the modified arithmetic complexity metric where we are counting the number of GDBs instead of multipliers. In this section, the GDB counts of the proposed self-contained factorization for the Vandermonde matrices via algorithms $\mathbf{vanc(z, N)}$, $\mathbf{vancc(z, N)}$, $\mathbf{vancr(z, N)}$, and $\mathbf{vanccr(z, N)}$ will be addressed. The direct  analog computation {of the}
Vandermonde matrix by a vector $\mathbf{ z} \in \mathbb{C}$ in the usual way requires $\mathcal{O}(N^2)$ GDB circuits to be realized in parallel in the RF-IC analog computing device.

{However, we will show in this section that the proposed
 self-recursive radix-2 algorithms can be utilized to compute Vandermonde matrices by a vector with $\mathcal{O}(N\: {\log}\: N)$ GDB counts.}

This is a dramatic circuit complexity reduction of Vandermonde matrices by a vector in the literature. Although the computation speed is still the same, the new factorization reduces chip area and power consumption due to the smaller amount of GDB circuits that have to be physically realized on the analog computing device.

\subsection{GDB Counts of Analog Fast Algorithms for Vandermonde Matrices}
\label{sub:comvan}
Here we analyze the analog GDB counts of the radix-2 algorithms for Vandermonde matrices presented in Section \ref{sub:van}. Let us denote the number of complex/real additions (say $\#a\mathbb{C}$/$\#a\mathbb{R}$ respectively) and complex/real multiplications (say $\#m\mathbb{C}/\#m\mathbb{R}$ respectively) required to compute $\mathbf{ y}=\mathbf{ V}_{N}\mathbf{ z}$ and $\mathbf{ y}=\mathbf{ \tilde{V}}_{N}\mathbf{ z}$ having $\mathbf{ z} \in \mathbb{C}^N$ or $\mathbb{R}^N$. We do not count multiplication by $\pm 1$ and permutation.

Let us first analyze the complex GDB counts of the radix-2 algorithms for Vandermonde matrices by a complex input vector.  We recall that the GDBs implement a complex multiplication defined in the frequency domain $\omega_t$ which requires a time-domain delay to implement on the DVM signal flow graphs. We recall that the independent frequency variable of the DVM is $\omega_x$ and that $\omega_t$ is the temporal frequency parameter associated with the matrix elements $\alpha.$ This is why the complex multiplication operations, which contain $e^{-j\omega_t\tau}$ terms, must in practice be realized in the time domain using time-delays.

\begin{theorem}
\label{thm:cvan}
Let $N=2^t (\geq 2)$ and $\theta$ be given. The complex GDB counts of the proposed $\mathbf{vancc(z, N)}$ algorithm with $\mathbf{ z} \in \mathbb{C}^N$ is given by
\begin{align}
\#a\mathbb{C}(VanCC, N) &=Nt,
\nonumber \\
\#m\mathbb{C}(VanCC, N) &= Nt-N+1.
\label{amvan}
\end{align}
\end{theorem}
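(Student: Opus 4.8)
The plan is to argue by induction on $t$ (equivalently on $N=2^t$), setting up a pair of recurrences for the additive and multiplicative GDB counts directly from the definition of the algorithm $\mathbf{vancc(z, N)}$ and then checking that the claimed closed forms $Nt$ and $Nt-N+1$ satisfy them. Writing $a(N):=\#a\mathbb{C}(VanCC,N)$ and $m(N):=\#m\mathbb{C}(VanCC,N)$, I would first dispose of the base case $N=2$ ($t=1$): the $2\times2$ block $\begin{bmatrix}1 & e^{j\theta}\\ 1 & -e^{j\theta}\end{bmatrix}$ forms the product $e^{j\theta}z_1$ with one complex multiplication and then builds $z_0\pm e^{j\theta}z_1$ with two complex additions, so $a(2)=2$ and $m(2)=1$, matching $Nt=2$ and $Nt-N+1=1$.

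For the inductive step I would tally the operations contributed by each factor appearing in the $N\ge 4$ branch, using the stated convention that multiplications by $\pm 1$ and permutations are not counted. The step $\mathbf{u}:=\mathbf{C}_N\mathbf{z}$ scales only the lower half by $c$, costing $N/2$ multiplications and no additions. The butterfly $\mathbf{v}:=\hat{\mathbf{I}}_N\mathbf{u}$ forms $u_i\pm u_{i+N/2}$, costing $N$ additions and no counted multiplications. The diagonal step $\mathbf{w}:=\hat{\mathbf{D}}_N\mathbf{v}$ scales the lower half by the entries of $\dot{D}_{N/2}=\operatorname{diag}[e^{l(2\pi j/N)}]_{l=0}^{N/2-1}$; since the $l=0$ entry equals $1$ and no entry equals $-1$ for $0\le l<N/2$, this contributes exactly $N/2-1$ multiplications. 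The two recursive calls add $2a(N/2)$ additions and $2m(N/2)$ multiplications, while the final permutation $\mathbf{P}_N^T$ is free. Collecting terms yields the recurrences $a(N)=2a(N/2)+N$ and $m(N)=2m(N/2)+(N-1)$.

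To close the induction I would substitute the hypotheses $a(N/2)=(N/2)(t-1)$ and $m(N/2)=(N/2)(t-1)-N/2+1$ into these recurrences and verify that they collapse to $a(N)=N(t-1)+N=Nt$ and $m(N)=N(t-1)+1=Nt-N+1$, completing the argument.

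I expect the only delicate point to be the exact multiplication bookkeeping in the two scaling steps: confirming that $\mathbf{C}_N$ genuinely incurs $N/2$ counted multiplications for generic $\theta$ (so that $c=e^{j\theta N/2}\neq\pm 1$), and that among the $N/2$ diagonal entries of $\dot{D}_{N/2}$ precisely the single $l=0$ term is the trivial factor $1$ while none of the entries equals $-1$ on the index range $0\le l<N/2$. Once this counting is pinned down, the remainder is a routine verification of a linear recurrence against its claimed closed form.
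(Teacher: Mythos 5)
Your proposal is correct and follows essentially the same route as the paper: the paper derives exactly the per-factor counts you give ($N/2$ multiplications for $\mathbf{C}_N$, $N$ additions for $\hat{\mathbf{I}}_N$, $N/2-1$ multiplications for $\hat{\mathbf{D}}_N$), forms the same first-order difference equations $a(N)=2a(N/2)+N$ and $m(N)=2m(N/2)+N-1$ with the same initial conditions $a(2)=2$, $m(2)=1$, and solves them to the stated closed forms. Your framing as induction versus the paper's ``solve the recurrence'' is a cosmetic difference only.
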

\begin{proof}
Referring to the algorithm $\mathbf{vancc(z, N)}$, we get
\begin{equation}
\begin{aligned}
\#a\mathbb{C}(\textrm{VanCC}, N)  &= 2\cdot \#a\mathbb{C}\left(\textrm{VanCC}, \frac{N}{2} \right) +  \#a\mathbb{C}\left(\hat{\mathbf{ D}}_N \right)
\\ & \hspace{.1in} +\#a\mathbb{C}\left(\hat{\mathbf{ I}}_N \right)+\#a\mathbb{C}\left(\mathbf{ C}_N \right)
\end{aligned}
\label{vanaim}
\end{equation}
By following the structures of $\hat{\mathbf{D}}_N$, $\hat{\mathbf{I}}_N$ and $\mathbf{C}_N$,
\begin{equation}
\begin{split}
\#a\mathbb{C}\left(\hat{\mathbf{ D}}_N \right)= 0,
\quad
&\#m\mathbb{C}\left(\hat{\mathbf{ D}}_N \right) = \frac{N}{2}-1
\\
\#a\mathbb{C}\left(\hat{\mathbf{ I}}_N \right)= N,
\quad
&\#m\mathbb{C}\left(\hat{\mathbf{ I}}_N \right) = 0
\\
\#a\mathbb{C}\left(\mathbf{ C}_N \right)= 0,
\quad
&\#m\mathbb{C}\left(\mathbf{ C}_N \right) = \frac{N}{2}
\end{split}
\label{Ddot}
\end{equation}
Thus by using the above, we could state \eqref{vanaim} as the first order difference equation with respect to $t \geq 2$
 \[
\#a\mathbb{C}(\textrm{VanCC}, 2^t)  - 2\cdot \#a\mathbb{C}\left(\textrm{VanCC}, 2^{t-1}\right) = 2^{t}.
\]
Solving the above difference equation using the initial condition $\#a\mathbb{C}(\textrm{VanCC}, 2)=2$, we can obtain
\[
\#a\mathbb{C}(\textrm{VanCC}, 2^t) =Nt.
\]
Now by using the algorithm $\mathbf{vancc(z, N)}$ and \eqref{Ddot}, we could obtain another first order difference equation with respect to $t \geq 2$
 \[
\#m\mathbb{C}(\textrm{VanCC}, 2^t)  - 2\cdot \#m\mathbb{C}\left(\textrm{VanCC}, 2^{t-1}\right) =  2^{t}-1.
\]
Solving the above difference equation using the initial condition $\#m\mathbb{C}(\textrm{VanCC}, 2)=1$, we can obtain
\[
\#m\mathbb{C}(\textrm{VanCC}, 2^t) =Nt-N+1.
\]
\end{proof}

\begin{corollary}
\label{cthm:cvan}
Let $N=2^t (\geq 2)$, $r$ and $\theta$ be given. The complex GDB counts of the proposed $\mathbf{vanccr(z, N)}$ algorithm with $\mathbf{ z} \in \mathbb{C}^N$ is given by
\begin{align}
\#a\mathbb{C}(VanCCR, N) &=Nt,
\nonumber \\
\#m\mathbb{C}(VanCCR, N) &= Nt-\frac{1}{2}N.
\label{camvan}
\end{align}
\end{corollary}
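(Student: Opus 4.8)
The plan is to read off most of the result directly from Theorem~\ref{thm:cvan}, since the algorithm $\mathbf{vanccr(z, N)}$ does nothing more than prepend the diagonal scaling $\tilde{\mathbf{D}}_N=\operatorname{diag}[r^l]_{l=0}^{N-1}$ to a call of $\mathbf{vancc}$. First I would dispose of the addition count: because $\tilde{\mathbf{D}}_N$ is diagonal it introduces no complex additions, and it alters no other stage of $\mathbf{vancc}$, so the addition total is inherited unchanged, giving $\#a\mathbb{C}(VanCCR, N)=\#a\mathbb{C}(VanCC, N)=Nt$.

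For the multiplications, the key observation — and the step I expect to be the crux — is that the two successive diagonal operations at the top of the recursion, namely $\mathbf{u}:=\tilde{\mathbf{D}}_N\mathbf{z}$ followed immediately by the first step $\mathbf{C}_N\mathbf{u}$ inside $\mathbf{vancc}$, should be \emph{merged} into a single diagonal rather than counted as two independent stages. I would write this combined diagonal explicitly as
\[
\mathbf{C}_N\tilde{\mathbf{D}}_N=\operatorname{diag}\left[r^0,\,r^1,\,\ldots,\,r^{\frac{N}{2}-1},\,c\,r^{\frac{N}{2}},\,\ldots,\,c\,r^{N-1}\right],
\]
and note that, since only the leading entry $r^0=1$ is trivial, this factor costs $N-1$ multiplications. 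Because $\mathbf{C}_N$ on its own already accounts for $\tfrac{N}{2}$ multiplications within the count of $\mathbf{vancc}$, the net surplus incurred by the radius scaling is only $(N-1)-\tfrac{N}{2}=\tfrac{N}{2}-1$, and this surplus occurs just once, at the top level, leaving the interior recursion of $\mathbf{vancc}$ untouched.

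Adding this surplus to the multiplication formula of Theorem~\ref{thm:cvan} would then yield
\[
\#m\mathbb{C}(VanCCR, N)=\#m\mathbb{C}(VanCC, N)+\left(\tfrac{N}{2}-1\right)=(Nt-N+1)+\left(\tfrac{N}{2}-1\right)=Nt-\tfrac{1}{2}N,
\]
as claimed. I would also verify the base case $N=2$, where $\mathbf{vanccr}$ applies the block $\begin{bmatrix}1 & re^{j\theta}\\ 1 & -re^{j\theta}\end{bmatrix}$ directly at the cost of one multiplication, matching the formula value $2\cdot1-\tfrac12\cdot2=1$ and the surplus $\tfrac{N}{2}-1=0$.

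The main obstacle is therefore purely one of bookkeeping discipline: a naive count that treats $\tilde{\mathbf{D}}_N$ as a stand-alone stage would attribute $N-1$ extra multiplications to it and overshoot to $Nt$, so the entire content of the corollary rests on recognizing that the radius diagonal can be absorbed into the $\mathbf{C}_N$ factor already present in $\mathbf{vancc}$, with $\tfrac{N}{2}$ of its multiplications effectively coming for free.
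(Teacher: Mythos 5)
Your proof reaches the stated counts and is, at the top level, the same reduction the paper uses: inherit \eqref{amvan} from Theorem~\ref{thm:cvan} and add the incremental cost of the radius diagonal, which contributes no additions and exactly $\tfrac{N}{2}-1$ extra multiplications. Where you genuinely diverge is in how that increment of $\tfrac{N}{2}-1$ is justified. The paper simply asserts that ``the multiplication of the diagonal matrix $\tilde{\mathbf{D}}_N$ with a complex input counts no addition and $\tfrac{N}{2}-1$ multiplications,'' treating $\tilde{\mathbf{D}}_N$ as a stand-alone stage; taken literally this is hard to defend, since $\tilde{\mathbf{D}}_N=\operatorname{diag}[r^l]_{l=0}^{N-1}$ has $N-1$ non-unit entries, and indeed the companion real-arithmetic Corollary~\ref{cRthm:cvan} charges $N-1$ for the very same matrix. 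Your fusion argument --- absorbing $\tilde{\mathbf{D}}_N$ into the $\mathbf{C}_N$ factor so that the combined diagonal costs $N-1$ of which $\tfrac{N}{2}$ are already paid for inside $\mathbf{vancc}$ --- is the reading under which the figure $\tfrac{N}{2}-1$, and hence the total $Nt-\tfrac{1}{2}N$, is actually attainable; in that sense your writeup supplies the missing step rather than merely reproducing the paper's. The one caveat you should make explicit is that this count presupposes an implementation in which the two successive diagonal stages are merged, whereas $\mathbf{vanccr(z,N)}$ as written computes $\mathbf{u}:=\tilde{\mathbf{D}}_N\mathbf{z}$ and then calls $\mathbf{vancc}$, which would cost $Nt$ multiplications (as you note); so either the algorithm should be restated with the fused diagonal or the corollary should be read as a count for the optimized circuit rather than for the pseudocode verbatim. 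Your base-case check at $N=2$ is consistent and worth keeping.
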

\begin{proof}
The multiplication of the diagonal matrix $\tilde{\mathbf{D}}_N$ with a complex input counts no addition and $\frac{N}{2}-1$ multiplications. Thus by using $\mathbf{vanccr(z, N)}$ algorithm and GDB counts in \eqref{amvan}, the complex GDB counts can be obtained as in \eqref{amvan}.
\end{proof}
\begin{theorem}
\label{thm:ccvan}
Let $N=2^t (\geq 2)$ and $\theta$ be given. The complex GDB counts of the proposed $\mathbf{vanc(z, N)}$ algorithm with $\mathbf{ z} \in \mathbb{C}^N$ is given by
\begin{align}
\#a\mathbb{C}(VanC, N) &=Nt,
\nonumber \\
\#m\mathbb{C}(VanC, N) &= Nt-N+1.
\label{amvanc}
\end{align}
\end{theorem}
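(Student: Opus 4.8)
The plan is to exploit the fact that the recursive skeleton of $\mathbf{vanc(z, N)}$ is structurally identical to that of $\mathbf{vancc(z, N)}$ already analyzed in Theorem \ref{thm:cvan}. Comparing the two algorithms step by step, the only differences are that $\bar{\mathbf{C}}_N$ replaces $\mathbf{C}_N$ and $\check{\mathbf{D}}_N$ replaces $\hat{\mathbf{D}}_N$, whereas the butterfly block $\hat{\mathbf{I}}_N$, the permutation $\mathbf{P}_N^T$, and the pattern of the two half-size recursive calls are all unchanged. First I would record the per-level cost of each factor exactly as in \eqref{Ddot}: the block $\hat{\mathbf{I}}_N$ contributes $N$ additions and no multiplications, and I must verify that the conjugated diagonal factors carry the same counts as their unconjugated counterparts.

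The key observation is that conjugation preserves both the sparsity pattern of the diagonal factors and, crucially, the set of \emph{trivial} entries. Since $\bar{\mathbf{C}}_N=\operatorname{diag}[I_{\frac{N}{2}},\,\bar{c}\cdot I_{\frac{N}{2}}]$ has exactly the same structure as $\mathbf{C}_N$, it contributes $\frac{N}{2}$ nontrivial complex multiplications and no additions. Likewise $\check{\mathbf{D}}_N$ has leading diagonal entry $e^{-0}=1$, just as $\hat{\mathbf{D}}_N$ has leading entry $e^{0}=1$, so both contribute $\frac{N}{2}-1$ nontrivial multiplications and no additions. Hence the per-level GDB counts of $\mathbf{vanc}$ coincide termwise with those of $\mathbf{vancc}$, and the $N=2$ branch likewise costs one multiplication (by $e^{-j\theta}$) and two additions.

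With the per-level counts established, I would assemble the same first-order difference equations as in Theorem \ref{thm:cvan}, namely
\begin{equation}
\#a\mathbb{C}(\textrm{VanC}, 2^t) - 2\cdot\#a\mathbb{C}(\textrm{VanC}, 2^{t-1}) = 2^t,
\nonumber
\end{equation}
\begin{equation}
\#m\mathbb{C}(\textrm{VanC}, 2^t) - 2\cdot\#m\mathbb{C}(\textrm{VanC}, 2^{t-1}) = 2^t - 1,
\nonumber
\end{equation}
subject to the base cases $\#a\mathbb{C}(\textrm{VanC}, 2)=2$ and $\#m\mathbb{C}(\textrm{VanC}, 2)=1$ read off the $N=2$ branch. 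Solving these recurrences (by dividing through by $2^t$ and telescoping) reproduces $\#a\mathbb{C}(\textrm{VanC}, N)=Nt$ and $\#m\mathbb{C}(\textrm{VanC}, N)=Nt-N+1$, as claimed.

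The only genuine subtlety I would guard against is the bookkeeping of trivial multiplications: if conjugation accidentally converted a nontrivial factor into $\pm 1$ (or the reverse), either the base case or the inhomogeneous term would shift and the closed form would change. The argument above shows this cannot happen, because conjugation fixes $1$ and fixes $\pm 1$, so the number of entries excluded from the multiplication tally is preserved under passing from $\mathbf{C}_N,\hat{\mathbf{D}}_N$ to $\bar{\mathbf{C}}_N,\check{\mathbf{D}}_N$. Everything else is the verbatim recurrence-solving already carried out for $\mathbf{vancc}$, so I expect no further obstacle.
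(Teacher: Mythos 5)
Your proposal is correct and follows essentially the same route as the paper, which simply states that the proof of Theorem~\ref{thm:ccvan} mirrors that of Theorem~\ref{thm:cvan} with $\check{\mathbf{D}}_N$ and $\bar{\mathbf{C}}_N$ in place of $\hat{\mathbf{D}}_N$ and $\mathbf{C}_N$. You merely make explicit the point the paper leaves implicit --- that conjugation preserves the sparsity pattern and the set of trivial (unit) entries, so the per-level counts and the recurrences are unchanged --- which is a worthwhile clarification but not a different argument.
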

\begin{proof}
The proof follows similar lines as that of Theorem \ref{thm:cvan} except $\check{\mathbf{D}}_N$ instead of $\hat{\mathbf{D}}_N$ and $\bar{\mathbf{C}}_N$ instead of $\mathbf{C}_N$.
\end{proof}

\begin{corollary}
\label{cthm:ccvan}
Let $N=2^t (\geq 2)$, $r$ and $\theta$ be given. The complex GDB counts of the proposed $\mathbf{vancr(z, N)}$ algorithm with $\mathbf{ z} \in \mathbb{C}^N$ is given by
\begin{align}
\#a\mathbb{C}(VanCR, N) &=Nt,
\nonumber \\
\#m\mathbb{C}(VanCR, N) &= Nt-\frac{1}{2}N.
\label{camvanc}
\end{align}
\end{corollary}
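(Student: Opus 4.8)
The plan is to exploit the fact that the algorithm $\mathbf{vancr(z,N)}$ is nothing but a diagonal pre-scaling followed by a call to $\mathbf{vanc(z,N)}$, exactly mirroring the relationship between $\mathbf{vanccr(z,N)}$ and $\mathbf{vancc(z,N)}$ already treated in Corollary \ref{cthm:cvan}. Concretely, Corollary \ref{cthm:van2} supplies the factorization $\tilde{\mathbf{V}}_N = \mathbf{V}_N \tilde{\mathbf{D}}_N$ with $\tilde{\mathbf{D}}_N = \operatorname{diag}[r^l]_{l=0}^{N-1}$, so the GDB count of $\mathbf{vancr}$ splits additively into the cost of applying $\tilde{\mathbf{D}}_N$ to the complex input plus the cost of the subsequent $\mathbf{vanc}$ pass. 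The latter is already available from Theorem \ref{thm:ccvan} as $\#a\mathbb{C}(\textrm{VanC},N)=Nt$ and $\#m\mathbb{C}(\textrm{VanC},N)=Nt-N+1$, so no new recurrence needs to be solved.

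First I would count the diagonal stage. Multiplying a complex vector by $\tilde{\mathbf{D}}_N$ introduces no additions, and, using the same accounting convention fixed in the proof of Corollary \ref{cthm:cvan}, it contributes $\frac{N}{2}-1$ nontrivial complex multiplications (the entry $r^0=1$ being free). Adding the two stages then yields $\#a\mathbb{C}(\textrm{VanCR},N) = 0 + Nt = Nt$ for the additions and $\#m\mathbb{C}(\textrm{VanCR},N) = \left(\frac{N}{2}-1\right) + (Nt-N+1) = Nt-\frac{N}{2}$ for the multiplications, which is precisely \eqref{camvanc}. Equivalently, one may observe that $\mathbf{vanc}$ and $\mathbf{vancc}$ carry identical GDB counts by Theorems \ref{thm:cvan} and \ref{thm:ccvan}, and that $\mathbf{vancr}$ and $\mathbf{vanccr}$ prepend the same diagonal $\tilde{\mathbf{D}}_N$; the present corollary is therefore term-for-term the VanCCR statement of Corollary \ref{cthm:cvan} with VanC substituted for VanCC.

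There is no genuine difficulty here, only a consistency point to be careful about: one must confirm that passing to the clockwise, radius-$r$ construction does not alter the multiplication bookkeeping relative to the counterclockwise case. This holds because $\mathbf{vanc}$ differs from $\mathbf{vancc}$ only through the conjugated factors $\check{\mathbf{D}}_N$ and $\bar{\mathbf{C}}_N$ in place of $\hat{\mathbf{D}}_N$ and $\mathbf{C}_N$, and these have exactly the same number of nontrivial entries, so the count is invariant under conjugation of the twiddle factors. The main---though minor---obstacle is thus simply to state the $\tilde{\mathbf{D}}_N$ multiplication count consistently with the convention already adopted in Corollary \ref{cthm:cvan}, after which the result follows as a one-line appeal to Theorem \ref{thm:ccvan}.
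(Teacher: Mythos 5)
Your proposal is correct and follows essentially the same route as the paper: count the diagonal stage $\tilde{\mathbf{D}}_N$ as zero additions and $\frac{N}{2}-1$ complex multiplications, then add the $\mathbf{vanc}$ counts from Theorem \ref{thm:ccvan} to obtain \eqref{camvanc}. You even reproduce the paper's own accounting convention for $\tilde{\mathbf{D}}_N$ (including the slightly curious $\frac{N}{2}-1$ rather than $N-1$), so there is nothing to add.
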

\begin{proof}
The multiplication of the diagonal matrix $\tilde{\mathbf{D}}_N$ with a complex input counts no addition and $\frac{N}{2}-1$ multiplications. Thus by using $\mathbf{vancr(z, N)}$ algorithm and GDB counts in \eqref{amvanc}, the complex GDB counts can be obtained as in \eqref{camvanc}.
\end{proof}

Let us analyze the real GDB counts of the radix-2 algorithms for Vandermonde matrices by a real input vector. Here we count the multiplication of two complex numbers with 2 real additions and 4 real multiplications.

\begin{theorem}
\label{Rthm:cvan}
Let $N=2^t (\geq 2)$ and $\theta$ be given. The real GDB counts of the proposed $\mathbf{vancc(z, N)}$ algorithm with $\mathbf{ z} \in \mathbb{R}^N$ is given by
\begin{align}
\#a\mathbb{R}(VanCC, N) &=Nt,
\nonumber \\
\#m\mathbb{R}(VanCC, N) &= 2Nt-\frac{5}{2}N+2.
\label{Ramvan}
\end{align}
\end{theorem}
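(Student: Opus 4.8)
The plan is to reproduce the difference-equation argument of Theorem~\ref{thm:cvan}, but now measuring cost in real additions and real multiplications and using the stated model that a complex product costs $4$ real multiplications and $2$ real additions while a complex sum costs $2$ real additions. First I would write the real cost of $\mathbf{vancc}(\mathbf{z},N)$ as the two half-size recursive calls plus the real cost of the three sparse factors $\mathbf{C}_N$, $\hat{\mathbf{I}}_N$, and $\hat{\mathbf{D}}_N$, exactly in the shape of \eqref{vanaim}, and then substitute the per-factor operator counts already recorded in \eqref{Ddot}.

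The decisive new ingredient is that $\mathbf{z}$ is real. I would argue that the twiddle multiplications carried by $\mathbf{C}_N$ and $\hat{\mathbf{D}}_N$ then act, stage by stage, as real-by-complex products and so cost only $2$ real multiplications each rather than $4$; since $\#m\mathbb{C}(\mathbf{C}_N)+\#m\mathbb{C}(\hat{\mathbf{D}}_N)=\tfrac{N}{2}+\bigl(\tfrac{N}{2}-1\bigr)=N-1$ by \eqref{Ddot}, each level contributes $2(N-1)$ real multiplications. Likewise the butterfly $\hat{\mathbf{I}}_N$, whose complex-addition count is $N$, contributes $N$ real additions per level. This yields, for $t\geq 2$, the two first-order difference equations
\[
\#a\mathbb{R}(\mathrm{VanCC},2^t)=2\,\#a\mathbb{R}(\mathrm{VanCC},2^{t-1})+2^t,
\]
\[
\#m\mathbb{R}(\mathrm{VanCC},2^t)=2\,\#m\mathbb{R}(\mathrm{VanCC},2^{t-1})+2\bigl(2^t-1\bigr).
\]

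I would close the recursion with the base case $N=2$, where the $2\times 2$ block applied to the real pair $\mathbf{z}$ is read as $\#a\mathbb{R}(\mathrm{VanCC},2)=2$ and $\#m\mathbb{R}(\mathrm{VanCC},2)=1$. Solving the addition recurrence reproduces $\#a\mathbb{R}=Nt$ exactly as in the complex case. For the multiplication recurrence the homogeneous solution is $\propto N$, and the $2^t$ and constant parts of the forcing term generate a $t\,2^t$ contribution together with an $N$ term; carrying out the summation with the stated initial value returns $\#m\mathbb{R}=2Nt-\tfrac{5}{2}N+2$.

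The main obstacle is justifying the per-stage real costs, not solving the recurrences. After the first butterfly the intermediate data is genuinely complex, so it is not automatic that the twiddle products remain at the real-by-complex rate of $2$ real multiplications; making the claimed count rigorous requires showing that the real structure of the input can be propagated through the recursion (or that the imaginary parts it introduces are organised so that each twiddle still costs $2$ real multiplications), together with a careful reading of the cheaper $N=2$ base case, where the single $e^{\pm j\theta}$ scaling is charged as one multiplication. Pinning down these constants is precisely what separates the real count from a naive doubling of the complex count, and I expect this separation of the first stage from the recursive tail to be the delicate step.
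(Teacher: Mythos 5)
Your proposal is correct and follows essentially the same route as the paper: the same recursion in the shape of \eqref{vanaim} with per-level costs $2^t$ real additions from $\hat{\mathbf{I}}_N$ and $2\cdot 2^t-2$ real multiplications from $\mathbf{C}_N$ and $\hat{\mathbf{D}}_N$, the same base values $\#a\mathbb{R}(\mathrm{VanCC},2)=2$ and $\#m\mathbb{R}(\mathrm{VanCC},2)=1$, and the same closed forms. The ``delicate step'' you flag --- justifying that each twiddle costs only $2$ real multiplications even though the intermediate data becomes complex after the first butterfly --- is not addressed in the paper either; it simply asserts $\#m\mathbb{R}(\hat{\mathbf{D}}_N)=N-2$ and $\#m\mathbb{R}(\mathbf{C}_N)=N$ in \eqref{RDdot}, so your account is, if anything, more candid about where the counting model is doing the work.
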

\begin{proof}
Referring to the algorithm $\mathbf{vancc(z, N)}$, we get
\begin{equation}
\begin{aligned}
\#m\mathbb{R}(\textrm{VanCC}, N) & = 2\cdot \#m\mathbb{R}\left(\textrm{VanCC}, \frac{N}{2} \right) +  \#m\mathbb{R}\left(\hat{\mathbf{ D}}_N \right)
\\ & \hspace{.1in}+\#m\mathbb{R}\left(\hat{\mathbf{ I}}_N \right)+\#m\mathbb{R}\left(\mathbf{ C}_N \right)
\end{aligned}
\label{Rvanaim}
\end{equation}
By following the structures of $\hat{\mathbf{D}}_N$, $\hat{\mathbf{I}}_N$ and $\mathbf{C}_N$,
\begin{equation}
\begin{split}
\#a\mathbb{R}\left(\hat{\mathbf{ D}}_N \right)= 0,
\quad
&\#m\mathbb{R}\left(\hat{\mathbf{ D}}_N \right) = N-2,
\\
\#a\mathbb{R}\left(\hat{\mathbf{ I}}_N \right)= N,
\quad
&\#m\mathbb{R}\left(\hat{\mathbf{ I}}_N \right) = 0,
\\
\#a\mathbb{R}\left(\mathbf{ C}_N \right)= 0,
\quad
&\#m\mathbb{R}\left(\mathbf{ C}_N \right) = N.
\end{split}
\label{RDdot}
\end{equation}
\end{proof}
Thus by using the above, we could state \eqref{Rvanaim} as the first order difference equation with respect to $t \geq 2$
 \[
\#m\mathbb{R}(\textrm{VanCC}, 2^t)  - 2\cdot \#m\mathbb{R}\left(\textrm{VanCC}, 2^{t-1}\right) = 2 \cdot 2^{t}-2.
\]
Solving the above difference equation using the initial condition $\#m\mathbb{R}(\textrm{VanCC}, 2)=1$, we can obtain
\[
\#m\mathbb{R}(\textrm{VanCC}, 2^t) =2Nt-\frac{5}{2}N+2
\]
Now by using the algorithm $\mathbf{vancc(z, N)}$ and \eqref{Ddot}, we could obtain another first order difference equation with respect to $t \geq 2$
 \[
\#a\mathbb{R}(\textrm{VanCC}, 2^t)  - 2\cdot \#a\mathbb{R}\left(\textrm{VanCC}, 2^{t-1}\right) = 2^{t}.
\]
Solving the above difference equation using the initial condition $\#a\mathbb{R}(\textrm{VanCC}, 2)=2$, we can obtain
\[
\#a\mathbb{R}(\textrm{VanCC}, 2^t) =Nt.
\]

\begin{corollary}
\label{cRthm:cvan}
Let $N=2^t (\geq 2)$, $r$ and $\theta$ be given. The real GDB counts of the proposed $\mathbf{vanccr(z, N)}$ algorithm with $\mathbf{ z} \in \mathbb{R}^N$ is given by
\begin{align}
\#a\mathbb{R}(VanCCR, N) &=Nt,
\nonumber \\
\#m\mathbb{R}(VanCCR, N) &= 2Nt-\frac{3}{2}N+1.
\label{cRamvan}
\end{align}
\end{corollary}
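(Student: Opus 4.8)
The plan is to exploit the fact that $\mathbf{vanccr(z, N)}$ differs from $\mathbf{vancc(z, N)}$ only by a single diagonal prescaling, exactly as the complex count of Corollary \ref{cthm:cvan} differed from Theorem \ref{thm:cvan}. Reading off the algorithm, $\mathbf{vanccr}$ first forms $\mathbf{u} := \tilde{\mathbf{D}}_N \mathbf{z}$ and then returns $\mathbf{vancc}(\mathbf{u}, N)$. Hence the total real GDB count splits additively into the cost of applying $\tilde{\mathbf{D}}_N$ to a real vector plus the cost of $\mathbf{vancc}$ on the resulting vector, the latter already being recorded in Theorem \ref{Rthm:cvan}. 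So I would first argue this additive decomposition and then evaluate each piece.

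The first step is to count $\tilde{\mathbf{D}}_N = \operatorname{diag}[r^l]_{l=0}^{N-1}$ acting on a real input. Since $r$ is a real magnitude, every diagonal entry $r^l$ is real, so each component map $z_l \mapsto r^l z_l$ is a single real multiplication contributing no addition. The entry $r^0 = 1$ is free (we do not count multiplication by $\pm 1$), which leaves $N-1$ nontrivial real multiplications and zero additions. I would also record the observation that, because $\tilde{\mathbf{D}}_N$ has real entries and $\mathbf{z} \in \mathbb{R}^N$, the prescaled vector $\mathbf{u}$ is again real; this is precisely what licenses invoking Theorem \ref{Rthm:cvan} verbatim on $\mathbf{u}$.

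The second step is to add the two contributions. For additions, $\tilde{\mathbf{D}}_N$ contributes nothing, so $\#a\mathbb{R}(\textrm{VanCCR}, N) = \#a\mathbb{R}(\textrm{VanCC}, N) = Nt$ by \eqref{Ramvan}. For multiplications, adding the $N-1$ real multiplications of the prescaling to $\#m\mathbb{R}(\textrm{VanCC}, N) = 2Nt - \tfrac{5}{2}N + 2$ gives $2Nt - \tfrac{5}{2}N + 2 + (N-1) = 2Nt - \tfrac{3}{2}N + 1$, which is the claimed count.

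I do not expect a genuine obstacle; the argument is a one-line additive composition on top of Theorem \ref{Rthm:cvan}. The only point requiring care is the real-versus-complex bookkeeping for the diagonal: whereas the complex case charged the prescaling a different number of multiplications, here each real scalar $r^l$ costs exactly one real multiplication, so the nontrivial entries yield $N-1$. Keeping this charge correct, and confirming that $\mathbf{u}$ remains real so that Theorem \ref{Rthm:cvan} transfers without modification, is the entirety of the work.
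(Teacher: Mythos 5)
Your proof is correct and follows essentially the same route as the paper: the paper likewise observes that $\tilde{\mathbf{D}}_N$ is a real diagonal matrix contributing no additions and $N-1$ real multiplications, and simply adds these to the counts of Theorem \ref{Rthm:cvan}. Your write-up only makes explicit the bookkeeping (the free $r^0=1$ entry and the fact that $\mathbf{u}$ stays real) that the paper leaves implicit.
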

\begin{proof}
$\tilde{\mathbf{D}}_N$ is a diagonal matrix with real entries so the number of additions will remain the same as in \eqref{Ramvan} while the number of multiplications will be increased by $N-1$ in \eqref{Ramvan}.
\end{proof}

\begin{theorem}
\label{Rthm:ccvan}
Let $N=2^t (\geq 2)$ and $\theta$ be given. The real GDB counts of the proposed $\mathbf{vanc(z, N)}$ algorithm with $\mathbf{ z} \in \mathbb{R}^N$ is given by
\begin{align}
\#a\mathbb{R}(VanC, N) &=Nt,
\nonumber \\
\#m\mathbb{R}(VanC, N) &= 2Nt-\frac{5}{2}N+2.
\label{Ramvanc}
\end{align}
\end{theorem}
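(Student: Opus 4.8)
The plan is to run the same recursion-and-difference-equation argument used in Theorem~\ref{Rthm:cvan} (the real count for $\mathbf{vancc}$), exploiting the structural fact—already noted in the proof of Theorem~\ref{thm:ccvan}—that $\mathbf{vanc(z, N)}$ shares the exact recursive skeleton of $\mathbf{vancc(z, N)}$, differing only in that it applies the conjugated building blocks $\check{\mathbf{D}}_N$ and $\bar{\mathbf{C}}_N$ in place of $\hat{\mathbf{D}}_N$ and $\mathbf{C}_N$. First I would record the per-level real operation counts of the three factors in the $N\geq 4$ branch. The crucial observation is that passing to complex conjugates of the unit-modulus constants does not alter any real arithmetic cost: multiplying a (real-then-complex) datum by $e^{-l(2\pi j/N)}$ costs exactly what multiplying by $e^{l(2\pi j/N)}$ does, and likewise for $\bar c$ versus $c$. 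Hence $\check{\mathbf{D}}_N$, $\hat{\mathbf{I}}_N$, $\bar{\mathbf{C}}_N$ inherit the counts in \eqref{RDdot}, namely $\#m\mathbb{R}(\check{\mathbf{D}}_N)=N-2$, $\#a\mathbb{R}(\hat{\mathbf{I}}_N)=N$, and $\#m\mathbb{R}(\bar{\mathbf{C}}_N)=N$, with all remaining additions from the two diagonal/scaling factors equal to zero.

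With these building-block counts in hand, the recursive decomposition of $\mathbf{vanc}$ into two half-size calls plus the three factors yields, with respect to $t\geq 2$, the two first-order difference equations
\[
\#m\mathbb{R}(\textrm{VanC}, 2^t) - 2\cdot \#m\mathbb{R}(\textrm{VanC}, 2^{t-1}) = 2\cdot 2^{t}-2,
\]
\[
\#a\mathbb{R}(\textrm{VanC}, 2^t) - 2\cdot \#a\mathbb{R}(\textrm{VanC}, 2^{t-1}) = 2^{t}.
\]
These are term-for-term identical to the recurrences governing $\mathbf{vancc}$ in the proof of Theorem~\ref{Rthm:cvan}. Solving the multiplication recurrence with a particular solution of the form $Bt2^{t}+C$ (giving $B=2$, $C=2$) and fixing the homogeneous constant by the base case produces the closed form, and the addition recurrence solves to the linear $Nt$ term, reproducing exactly the right-hand sides claimed in \eqref{Ramvanc}.

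The only points requiring care—and what I would treat as the main (though minor) obstacle—are the initial conditions and the real/complex data bookkeeping at the leaves. I would verify that the $N=2$ stage of $\mathbf{vanc}$ differs from that of $\mathbf{vancc}$ solely through the sign in $e^{-j\theta}$ versus $e^{j\theta}$, which again leaves real-operation counts untouched, so the same base values $\#a\mathbb{R}(\textrm{VanC},2)=2$ and $\#m\mathbb{R}(\textrm{VanC},2)=1$ apply. Substituting $t=1$ into the candidate closed forms confirms consistency: $2N t-\tfrac52 N+2$ evaluates to $1$ and $Nt$ evaluates to $2$ at $N=2$. Because conjugation preserves the real-multiplication count of each unit-modulus scaling and the real-addition count of each butterfly, no separate analysis is needed, and the result follows immediately by the analogy with Theorem~\ref{Rthm:cvan}.
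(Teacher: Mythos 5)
Your proposal is correct and takes essentially the same route as the paper, whose entire proof is the one-line remark that the argument of Theorem~\ref{Rthm:cvan} carries over with $\check{\mathbf{D}}_N$ and $\bar{\mathbf{C}}_N$ in place of $\hat{\mathbf{D}}_N$ and $\mathbf{C}_N$; you have simply written out that analogy in full, correctly observing that conjugation of the unit-modulus constants leaves every real operation count, recurrence, and initial condition unchanged.
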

\begin{proof}
The proof follows similar lines as that of Theorem \ref{Rthm:cvan} except $\check{D}_N$ instead of $\hat{\mathbf{D}}_N$ and $\bar{\mathbf{C}}_N$ instead of $\mathbf{C}_N$.
\end{proof}

\begin{corollary}
\label{cRthm:ccvan}
Let $N=2^t (\geq 2)$, $r$ and $\theta$ be given. The real GDB counts of the proposed $\mathbf{vancr(z, N)}$ algorithm with $\mathbf{ z} \in \mathbb{R}^N$ is given by
\begin{align}
\#a\mathbb{R}(VanCR, N) &=Nt,
\nonumber \\
\#m\mathbb{R}(VanCR, N) &= 2Nt-\frac{3}{2}N+1.
\label{cRamvanc}
\end{align}
\end{corollary}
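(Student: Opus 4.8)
The plan is to exploit the recursive definition of $\mathbf{vancr(z, N)}$, which for $N \geq 4$ first applies the real diagonal scaling $\tilde{\mathbf{D}}_N = \operatorname{diag}[r^l]_{l=0}^{N-1}$ to the input and then invokes $\mathbf{vanc}$ on the resulting vector. Since the real GDB counts of $\mathbf{vanc(z, N)}$ are already established in Theorem \ref{Rthm:ccvan}, the only additional work is to account for the cost of this single diagonal scaling, exactly as was done for the counterclockwise case in Corollary \ref{cRthm:cvan}. In effect, the relationship between this corollary and Theorem \ref{Rthm:ccvan} mirrors precisely the relationship between Corollary \ref{cRthm:cvan} and Theorem \ref{Rthm:cvan}.

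First I would count the real GDB cost of the map $\mathbf{z} \mapsto \tilde{\mathbf{D}}_N \mathbf{z}$ for $\mathbf{z} \in \mathbb{R}^N$. Because $\tilde{\mathbf{D}}_N$ is diagonal, it contributes no additions. Its diagonal entries are $r^0, r^1, \ldots, r^{N-1}$, all real; following the convention that multiplication by $1$ is not counted, the leading entry $r^0 = 1$ is free, while each of the remaining $N-1$ entries costs exactly one real multiplication. Hence $\#a\mathbb{R}(\tilde{\mathbf{D}}_N) = 0$ and $\#m\mathbb{R}(\tilde{\mathbf{D}}_N) = N-1$.

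Next I would add these contributions to the counts of $\mathbf{vanc}$ recorded in \eqref{Ramvanc}. The addition count is unaffected by a diagonal scaling, so $\#a\mathbb{R}(\textrm{VanCR}, N) = \#a\mathbb{R}(\textrm{VanC}, N) = Nt$. For the multiplications, adjoining the $N-1$ real multiplications of the scaling step gives
\[
\#m\mathbb{R}(\textrm{VanCR}, N) = \left(2Nt - \frac{5}{2}N + 2\right) + (N-1) = 2Nt - \frac{3}{2}N + 1,
\]
which is precisely the claimed expression in \eqref{cRamvanc}.

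There is no substantive obstacle here: the argument is a one-step reduction to the already-proven Theorem \ref{Rthm:ccvan}. The only point requiring minor care is the bookkeeping of the diagonal scaling, namely recognizing that the leading entry $r^0 = 1$ is excluded, so the scaling adds $N-1$ rather than $N$ real multiplications; this is exactly what shifts the $-\frac{5}{2}N$ term to $-\frac{3}{2}N$ and the constant $+2$ to $+1$, leaving the $2Nt$ leading term untouched.
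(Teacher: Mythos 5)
Your proposal is correct and follows exactly the paper's own argument: the real diagonal scaling $\tilde{\mathbf{D}}_N$ adds no additions and, excluding the unit entry $r^0=1$, contributes $N-1$ real multiplications on top of the counts in \eqref{Ramvanc}, yielding $2Nt-\tfrac{3}{2}N+1$. Your write-up merely spells out the bookkeeping the paper states in one line.
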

\begin{proof}
$\tilde{\mathbf{D}}_N$ is a diagonal matrix with real entries so the number of additions will remain the same as in \eqref{Ramvanc} while the number of multiplications will be increased by $N-1$ in \eqref{Ramvanc}.
\end{proof}

\subsection{Numerical Results}
\label{sec:numcom}
Here we provide numerical results for the GDB counts of the proposed radix-2 algorithms $\mathbf{ vanc(z, N)}$, $\mathbf{ vancc(z, N)}$, $\mathbf{ vancr(z, N)}$, and $\mathbf{ vanccr(z, N)}$. We consider the direct computation of Vandermonde matrices $\mathbf{ V}$ and $\mathbf{ \tilde{V}}$ by the vector $\mathbf{ z} \in \mathbb{C}^N$ with $N(N-1)$ complex additions and multiplications (note that $\mathbf{ V}$ and $\mathbf{ \tilde{V}}$ have 1's along the first column so we counted the multiplication count as $N(N-1)$ as opposed to $N^2$). Also, the direct computation of Vandermonde matrices $\mathbf{ V}$ and $\mathbf{ \tilde{V}}$ by the vector $\mathbf{ z} \in \mathbb{R}^N$ is taken as $N(2N-1)$ real additions and $2N(N-1)$ real multiplications (since $v_k = e^{-j\left(\theta+\frac{2\pi k}{N}\right)}$ we have considered on computing the powers of nodes using $v_k^l = e^{-jl\left(\theta+\frac{2\pi k}{N}\right)}$ for $l=2, 3, \cdots, N-1$). Note that we have not counted the multiplication by 1 in the Vandermonde matrices.
The numerical results for the GDB counts of the proposed algorithms $\mathbf{ vanc(z, N)}$, $\mathbf{ vancr(z, N)}$, $\mathbf{ vancc(z, N)}$,and $\mathbf{ vanccr(z, N)}$ with corresponding matrices $\mathbf{ V}_N$ and $\mathbf{ \tilde{V}}_N$  varying sizes from $4 \times 4$ to $4096 \times 4096$ are shown in Tables \ref{tbl:Vancost}, \ref{tbl:RVancost}, and \ref{tbl:RRVancost}.

\begin{table*}
\centering
\caption{Complex GDB counts of the proposed radix-2 algorithms (i.e. $\mathbf{ vanc(z, N)}$, $\mathbf{vancr(z, N)}$, $\mathbf{vancc(z, N)}$,and $\mathbf{vanccr(z, N)}$) vs Direct computation}
\begin{tabular}{ | l | l | l | l | l |}
    \hline
   $N$ & Direct  & $\#a\mathbb{C}(VanC, N)/$ & $\#m\mathbb{C}(VanC, N)/$ & $\#m\mathbb{C}(VanCR, N)$/ \\
 & Add/Multi  & $\#a\mathbb{C}(VanCR, N)$/ & $\#m\mathbb{C}(VanCC, N)$ &  $\#m\mathbb{C}(VanCCR, N)$\\
 &  & $\#a\mathbb{C}(VanCC, N)/$ &  &  \\
 &  & $\#a\mathbb{C}(VanCCR, N)$ &  &  \\
\hline \hline
4&  12  & 8  & 5  &    6\\ \hline
8& 56 &  24 &   17&  20 \\ \hline
16  &240  & 64  & 49  & 56  \\ \hline
   32   & 992  & 160   & 129  & 144   \\ \hline
64  & 4032  &  384 & 321 &  352\\ \hline
128 & 16256 & 896 & 769 & 832 \\ \hline
256 &65280 & 2048 &1793  &  1920\\ \hline
512 & 261632 & 4608 & 4097 &   4352\\ \hline
1024 & 1047552 & 10240 &  9217 & 9728 \\ \hline
2048 & 4192256 & 22528 &  20481 &  21504\\ \hline
4096 & 16773120 & 49152 &  45057 &  47104\\ \hline
\hline
\end{tabular}
\label{tbl:Vancost}
\end{table*}

\begin{table*}
\centering
\caption{Real GDB counts of the proposed radix-2 algorithms (i.e. $\mathbf{ vanc(z, N)}$ and $\mathbf{ vancc(z, N)}$) vs Direct computation}
\begin{tabular}{ | l | l | l | l | l |}
    \hline
   $N$ & Direct Add & $\#a\mathbb{R}(VanC, N)/$ & Direct Multi & $\#m\mathbb{R}(VanC, N)/$ \\
 &  & $\#a\mathbb{R}(VanCC, N)$ &  & $\#m\mathbb{R}(VanCC, N)$ \\
\hline \hline
4& 28  &  8 &   24&  8  \\ \hline
8& 120 &  24 &  112 &   30 \\ \hline
16  & 496 &  64 &  480&  90 \\ \hline
   32   &  2016 & 160   & 1984  &  242 \\ \hline
64  & 8128 &  384 & 8064 &  610 \\ \hline
128 & 32640 & 896 & 32512 & 1474\\ \hline
256 & 130816 & 2048 & 130560 & 3458 \\ \hline
512 &  523776& 4608 & 523264 &  7938 \\ \hline
1024 & 2096128  & 10240 &  2095104 &  17922 \\ \hline
2048 & 8386560 & 22528 &  8384512 &  39938 \\ \hline
4096 & 33550336 & 49152 & 33546240   &  88066 \\ \hline
\hline
\end{tabular}
\label{tbl:RVancost}
\end{table*}

\begin{table*}
\centering
\caption{Real GDB counts of the proposed radix-2 algorithms (i.e. $\mathbf{ vancr(z, N)}$ and $\mathbf{ vanccr(z, N)}$) vs Direct computation}
\begin{tabular}{ | l | l | l | l | l |}
    \hline
   $N$ & Direct Add & $\#a\mathbb{R}(VanCR, N)/$ & Direct Multi & $\#m\mathbb{R}(VanCR, N)/$ \\
 &  & $\#a\mathbb{R}(VanCCR, N)$ &  & $\#m\mathbb{R}(VanCCR, N)$ \\
\hline \hline
4& 28  &  8 &   24&  11  \\ \hline
8& 120 &  24 &  112 &   37 \\ \hline
16  & 496 &  64 &  480&  105 \\ \hline
   32   &  2016 & 160   & 1984  &  273 \\ \hline
64  & 8128 &  384 & 8064 &  673 \\ \hline
128 & 32640 & 896 & 32512 & 1601\\ \hline
256 & 130816 & 2048 & 130560 & 3713 \\ \hline
512 &  523776& 4608 & 523264 &  8449 \\ \hline
1024 & 2096128  & 10240 &  2095104 &  18945 \\ \hline
2048 & 8386560 & 22528 &  8384512 &  41985 \\ \hline
4096 & 33550336 & 49152 & 33546240   &  92161 \\ \hline
\hline
\end{tabular}
\label{tbl:RRVancost}
\end{table*}

Following Tables \ref{tbl:Vancost}, \ref{tbl:RVancost}, and \ref{tbl:RRVancost}, the proposed radix-2 algorithms for the Vandermonde matrices have shown significant arithmetic complexity reduction as opposed to the DVM algorithms presented in \cite{SVNA18, VS17, SAR19}. At the same time, we should recall that the DVM algorithms proposed in \cite{SVNA18, VS17, SAR19} have no restriction for nodes or delays as in this paper. Moreover, the proposed radix-2  algorithms for Vandermonde matrices have reduced GDB counts extensively opposed to the direct computation of Vandermonde matrices by a vector. More importantly, we have achieved the lowest GDB counts of radix-2 algorithms on computing Vandermonde matrices by a vector in the literature while covering radix-2 DFT algorithms as a subclass of the proposed radix-2 algorithms.

\section{Error Bound and Numerical Stability of Radix-2 Vandermonde Algorithms}
\label{sec:err}

\subsection{Theoretical Analysis}
Error bounds and numerical stability when computing the radix-2 Vandermonde algorithms associated with true time delays are the main concern in this section.
To derive analytic results for error bound,
we will use the perturbation of the product of matrices
(stated in~\cite{H96}).
Following the proposed radix-2 algorithms $\mathbf{ vancc(z, N)}$ and $\mathbf{ vanc(z, N)}$,
we have to compute weights $e^{\pm k(\frac{2 \pi j}{N})}=\omega_{\pm}^k(say)$, where $\omega_{\pm}=e^{\pm \frac{2 \pi j}{N}}$ for $k=0, 1, \ldots, \frac{N}{2}-1$.
The way we compute weights affects the accuracy of the algorithms.
Thus, we will assume that the computed weights
$\widehat{\omega}_{\pm}^k$ are used and satisfy for all
$k=0, 1, \ldots, \frac{N}{2}-1$
\begin{equation}
\widehat{\omega}_{\pm}^k = \omega_{\pm}^k + \epsilon_{k_{\pm}}, \:\:\: |\epsilon_{k_{+}}|\leq \mu_+, |\epsilon_{k_{-}}| \leq \mu_{-},
\label{scerror}
\end{equation}
where $\mu_{+}:=c_1u$ and$\mu_{-}:=c_1u$
$u$ is the unit roundoff,
and
$c_1$ and $c_2$ are constants that depend on the method~\cite{VL92}.

Let's recall the perturbation of the product of matrices
stated in~\cite[Lemma~3.7]{H96}
i.e. if $\mathbf{A}_k+\Delta \mathbf{A}_k \in \mathbb{R}^{N \times N}$ satisfies $| \Delta \mathbf{A}_k | \leq \delta_k |\mathbf{A}_k|$ for all $k$, then
\begin{equation}
\begin{aligned}
\begin{matrix}
& \Bigg| \displaystyle\prod_{k=0}^m \left( \mathbf{A}_k+\Delta \mathbf{A}_k\right)  - \displaystyle\prod_{k=0}^m \mathbf{A}_k   \Bigg|
 \leq
\\ & \hspace{1in}
\Bigg( \displaystyle\prod_{k=0}^m (1+\delta_k) -1 \Bigg) \displaystyle\prod_{k=0}^m \Bigg| \mathbf{A}_k \Bigg|
\end{matrix}
\nonumber
\end{aligned}
\end{equation}
where $|\delta_k| < u$.
Moreover, recall $\displaystyle\prod_{k=1}^N (1+\delta_k)^{\pm 1} = 1+\theta_N$ where $|\theta_N| \leq \frac{Nu}{1-Nu}=:\gamma_N$ and $\gamma_k+u \leq \gamma_{k+1}$,  $\gamma_k+\gamma_j+\gamma_k\gamma_j \leq \gamma_{k+j}$
from \cite[Lemma~3.1 and Lemma~3.3]{H96},
and for $x, y \in \mathbb{C}$, $fl(x \pm y)=(x +y)(1 + \delta)$ where $|\delta| \leq u$, $fl(xy)=(xy)(1 + \delta)$ where $|\delta| \leq \sqrt{2}\gamma_2$ from~\cite[Lemma~3.5]{H96}.

To carry out error analysis of the proposed algorithms in complex arithmetic, we implement complex arithmetic using real arithmetic operations computed according to number of additions and multiplications of non-unit numbers. Thus, we multiply $\hat{\mathbf{I}}_N$ (because it has only block identity matrices) and $\hat{\mathbf{D}}_N$, which were defined in \eqref{feqa}, and name as $\mathbf{B}_N$ s.t. $\mathbf{B}_N= \left[
\begin{array}{c|c}
 I_{\frac{N}{2}}&  I_{\frac{N}{2}}\\
\hline\\
 \dot{D}_{\frac{N}{2}} & -\dot{D}_{\frac{N}{2}}
\end{array}
\right] $. Similarly, we multiply $\hat{\mathbf{I}}_N$ (because it has only block identity matrices) and $\check{\mathbf{D}}_N$, which were defined in \eqref{feqa}, and name as $\check{\mathbf{B}}_N$ s.t. $\check{\mathbf{B}}_N=\left[
\begin{array}{c|c}
 I_{\frac{N}{2}}&  I_{\frac{N}{2}}\\
\hline\\
 \bar{\dot{D}}_{\frac{N}{2}} & -\bar{\dot{D}}_{\frac{N}{2}}
\end{array}
\right] $.

\begin{theorem}
\label{ThErR2V}
Let $\widehat{\mathbf{ y}}=fl(\mathbf{ V}_N \mathbf{ z})$, where $N=2^t(t \geq 2)$, be computed using the algorithm $\mathbf{ vancc(z, N)}$, and assume that \eqref{scerror} holds. Then
\begin{equation}
\begin{aligned}
\frac{\| \mathbf{ y} - \widehat{\mathbf{ y}} \|_2}{\|  \mathbf{ y} \|_2} \leq \frac{t \nu_{+}}{1- t \nu_+} N^{\frac{1}{2}}
\end{aligned}
\label{R2V}
\end{equation}
where $\nu_{+}=\eta_{+}\gamma_3 + \eta_{+}+ \gamma_3$ and $\eta_+=\mu_{+}+\gamma_4(1+\mu_{+})$.
\end{theorem}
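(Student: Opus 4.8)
The plan is to regard the algorithm $\mathbf{vancc(z, N)}$ as the successive application of $t=\log_2 N$ structured sparse factors to $\mathbf{z}$, one per level of the recursion, and then to accumulate the rounding errors level by level with the product-perturbation lemma quoted above from \cite[Lemma~3.7]{H96}. Unfolding the recursion in \eqref{eq:VM}, I would write $\mathbf{V}_N=\Pi\,\mathbf{M}_t\,\mathbf{M}_{t-1}\cdots\mathbf{M}_1$, where $\Pi$ collects the exact even--odd permutations and each $\mathbf{M}_k$ is the block-diagonal factor $\mathbf{B}_{N_k}\mathbf{C}_{N_k}$ acting on the blocks of size $N_k=2^{t-k+1}$, with $\mathbf{B}_N=\hat{\mathbf{D}}_N\hat{\mathbf{I}}_N$ as introduced just before the theorem. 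Because every node lies on $C(0,1)$, all entries of $\mathbf{V}_N$ and of each $\lvert\mathbf{M}_k\rvert$ have unit modulus, so the nonnegative product telescopes to $\prod_k\lvert\mathbf{M}_k\rvert=\lvert\mathbf{V}_N\rvert=\mathbf{J}_N$, the all-ones matrix (permutations preserve this pattern since $\Pi\mathbf{J}_N=\mathbf{J}_N$). This identity is what lets the product lemma deliver a clean right-hand side.

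The first substantive step is a single-level componentwise estimate $\lvert\Delta\mathbf{M}_k\rvert\le\nu_+\lvert\mathbf{M}_k\rvert$ for the computed factor $\widehat{\mathbf{M}}_k=\mathbf{M}_k+\Delta\mathbf{M}_k$, obtained by separating the two ingredients of $\mathbf{M}_k$. Applying a twiddle entry of $\hat{\mathbf{D}}_N$ amounts to multiplying by a computed weight $\widehat{\omega}_+^{\,k}$; since $\lvert\omega_+^{\,k}\rvert=1$, the absolute bound $\lvert\widehat{\omega}_+^{\,k}-\omega_+^{\,k}\rvert\le\mu_+$ from \eqref{scerror} is simultaneously a relative bound, and composing it with the complex-multiply rounding (bounded via $\sqrt{2}\gamma_2\le\gamma_4$ by $\gamma_4$) produces the relative perturbation $\eta_+=\mu_++\gamma_4(1+\mu_+)$, exactly as defined. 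The remaining operations at each level — the butterfly additions of $\hat{\mathbf{I}}_N$ and the scalar multiplication by $c$ in $\mathbf{C}_N$ — are charged a combined relative factor $1+\gamma_3$ through the rules $fl(x\pm y)=(x+y)(1+\delta)$ and $fl(xy)=(xy)(1+\delta)$ quoted above. Multiplying the two factors and subtracting one gives the per-level constant $\nu_+=(1+\eta_+)(1+\gamma_3)-1=\eta_+\gamma_3+\eta_++\gamma_3$.

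With the per-level bound in hand I would invoke \cite[Lemma~3.7]{H96} with $\delta_k=\nu_+$ across all $t$ factors to get $\lvert\widehat{\mathbf{y}}-\mathbf{y}\rvert\le\bigl((1+\nu_+)^t-1\bigr)\,\mathbf{J}_N\,\lvert\mathbf{z}\rvert$, and then bound $(1+\nu_+)^t-1\le t\nu_+/(1-t\nu_+)$ by the generalized $\gamma$-estimate (the same inequality $\prod(1+\delta_k)^{\pm1}=1+\theta_N$ with $\lvert\theta_N\rvert\le\gamma_N$, now with $\nu_+$ in place of $u$). The closing step converts this componentwise matrix bound into the stated normwise one: $\mathbf{J}_N\lvert\mathbf{z}\rvert=\|\mathbf{z}\|_1\mathbf{1}$ gives $\|\mathbf{J}_N\lvert\mathbf{z}\rvert\|_2=\sqrt{N}\,\|\mathbf{z}\|_1\le N\,\|\mathbf{z}\|_2$, while the unitarity up to the scaling $1/\sqrt{N}$ established in the proof of Theorem~\ref{thm:van} gives $\|\mathbf{y}\|_2=\|\mathbf{V}_N\mathbf{z}\|_2=\sqrt{N}\,\|\mathbf{z}\|_2$ exactly; dividing yields the factor $N^{1/2}$ and hence \eqref{R2V}.

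I expect the main obstacle to be the single-level componentwise estimate rather than the accumulation. One must legitimize turning the absolute weight error $\mu_+$ into a relative perturbation (valid only because the nodes sit on the unit circle), decide precisely how the complex-multiply rounding is split between the $\eta_+$ and $\gamma_3$ parts, and verify the bound entrywise in the exact form required by \cite[Lemma~3.7]{H96}, including the telescoping identity $\prod_k\lvert\mathbf{M}_k\rvert=\mathbf{J}_N$. Once that bookkeeping is settled, the product step and the $\gamma_N$-type estimate are routine.
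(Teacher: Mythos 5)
Your proposal is correct and follows essentially the same route as the paper's proof: unfold the recursion into the product of permutations, block-diagonal butterfly factors and diagonal scalings, bound each computed factor componentwise by $\eta_+$ (twiddles) and $\gamma_3$ (remaining operations), combine them into $\nu_+=(1+\eta_+)(1+\gamma_3)-1$, accumulate with \cite[Lemma~3.7]{H96}, and finish with the unitarity-up-to-scaling identity $\mathbf{V}_N^H\mathbf{V}_N=N I_N$. Your passage from the componentwise to the normwise bound via the telescoping identity $\prod_k\lvert\mathbf{M}_k\rvert=\mathbf{J}_N$ together with $\|\mathbf{z}\|_1\le\sqrt{N}\,\|\mathbf{z}\|_2$ is in fact a tidier justification of the factor $N^{1/2}$ than the paper's direct multiplication of the factors' $2$-norms, but it is a refinement of one step rather than a different argument.
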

\begin{proof}
Using the algorithm $\mathbf{ vancc(z, N)}$ and the computed matrices $\mathbf{\widehat{B}}(k)$ (in terms of computed weights $\widehat{\omega}_{+}^k$) for $k=0, 1, \cdots, t-2$: we have
\small
\begin{equation}
\begin{aligned}
\widehat{\mathbf{ y}}&=fl\Bigg(\mathbf{P}(0)\mathbf{P}(1) \cdots\mathbf{{P}}(t-2)\: \mathbf{V}(t-1)\:\widehat{\mathbf{B}}(t-2) \mathbf{C}(t-2)  \cdots
\\
& \hspace{.6in}
\widehat{\mathbf{B}}(1) \mathbf{C}(1) \widehat{\mathbf{B}}(0) \mathbf{{C}}(0)\: \mathbf{z}\Bigg)\\
&=\mathbf{P}(0)\mathbf{P}(1) \cdots\mathbf{P}(t-2)\: (\mathbf{V}(t-1) + \Delta\mathbf{V}(t-1))\:
\\
& \hspace{.2in}
(\widehat{\mathbf{B}}(t-2) + \Delta\widehat{\mathbf{B}}(t-2)) (\mathbf{ {C}}(t-2) + \Delta\mathbf{ C}(t-2))\cdots
\\
& \hspace{.3in}
(\widehat{\mathbf{B}}(1) + \Delta\widehat{\mathbf{B}}(1))(\mathbf{ {C}}(1)+ \Delta\mathbf{ {C}}(1))
\\
& \hspace{.4in}
(\widehat{\mathbf{B}}(0)+ \Delta\widehat{\mathbf{B}}(0)) (\mathbf{C}(0)+\Delta\mathbf{C}(0))\: \mathbf{ z}
\end{aligned}
\nonumber
\end{equation}
Each block diagonal matrix $\mathbf{ P}(k)$ and $\widehat{\mathbf{B}}(k)$ is formed by $2^k$ number of $P^T_{\frac{N}{2^k}}$'s and  $\mathbf{B}_{\frac{N}{2^k}}$'s respectively, in block diagonal positions. Using the fact that each $\mathbf{B}_{\frac{N}{2^k}}$ has only two non-zeros per row and recalling that we are using complex arithmetic, we get:
\begin{equation}
\begin{matrix}
\left | \Delta{\widehat{\mathbf{B}}(k)} \right | \leq {\gamma}_{4}\:\left | \widehat{\mathbf{B}}(k) \right |\:\:
{\rm for}\:\:\:k=0,1,\cdots,t-2.
\end{matrix}
\nonumber
\end{equation}
Using the fact that $\mathbf{\widehat{B}}(k)$ are computed using the computed weights $\widehat{\omega}_{+}^k$, we get:
\begin{equation}
\begin{matrix}
\widehat{\mathbf{B}}(k) =  \mathbf{ B}(k) + \Delta\mathbf{B}(k), \: \:\:\left | \Delta\mathbf{B}(k) \right | \leq \mu_{+} \left| \mathbf{ B}(k)\right|.
\end{matrix}
\nonumber
\end{equation}
Each block diagonal matrix $\mathbf{ C}(k)$ is formed by $2^k$ number of $\mathbf{C}_{\frac{N}{2^k}}$'s in block diagonal positions. Using the fact that each $\mathbf{C}_{\frac{N}{2^k}}$ has only one non-zeros per row and recalling that we are using complex arithmetic, we get:
\begin{equation}
\begin{matrix}
\left | \Delta{\mathbf{{C}}(k)} \right | \leq {\gamma}_{3}\:\left | \mathbf{{C}}(k) \right |\:\:
{\rm for}\:\:\:k=0,1,\cdots,t-2.
\end{matrix}
\nonumber
\end{equation}
$\mathbf{{V}}(t-1)$ is a block diagonal matrix and formed by $2^{t-1}$ number of $\mathbf{ V}_2$'s in diagonal positions. Hence
\begin{equation}
\begin{matrix}
\left | \Delta{\mathbf{{V}}(t-1)} \right | \leq {\gamma}_{3}\:\left | \mathbf{{V}}(t-1) \right |.
\end{matrix}
\nonumber
\end{equation}
Thus overall,
\small
\begin{equation}
\begin{aligned}
\widehat{\mathbf{ y}}&= \mathbf{{P}}(0)\mathbf{{P}}(1) \cdots\mathbf{{P}}(t-2)(\mathbf{{V}}(t-1) + \Delta\mathbf{{V}}(t-1))
\\
& \hspace{.2in}
(\mathbf{{B}}(t-2) + \mathbf{{E}}(t-2)) (\mathbf{ {C}}(t-2) + \Delta\mathbf{ {C}}(t-2))\cdots
\\
& \hspace{.3in}
(\mathbf{{B}}(1) + \mathbf{{E}}(1))(\mathbf{ {C}}(1)+ \Delta\mathbf{ {C}}(1))
\\
& \hspace{.4in}
(\mathbf{{B}}(0) + \mathbf{{E}}(0))(\mathbf{{C}}(0)+\Delta\mathbf{{C}}(0))\: \mathbf{ z}
\end{aligned}
\nonumber
\end{equation}
where $|\mathbf{{E}}(k)   | \leq (\mu_{+}+\gamma_4(1+\mu_{+})) | \mathbf{{B}}(k)  |=\eta_{+}| \mathbf{{B}}(k)  |$.
\\
Hence
\small
\begin{equation}
\begin{aligned}
| \mathbf{ y} - \widehat{\mathbf{ y}}|
& \leq [ (1+\eta_{+})^{t-1}(1+\gamma_3)^t -1 ] \mathbf{{P}}(0)\mathbf{{P}}(1) \cdots\mathbf{{P}}(t-2)
\\ & \hspace{.3in}
|\mathbf{{V}}(t-1)| |\mathbf{{B}}(t-2)| |\mathbf{ {C}}(t-2)|  \cdots |\mathbf{{B}}(1)| |\mathbf{ {C}}(1)|
\\ & \hspace{.4in}
|\mathbf{{B}}(0)| | \mathbf{{C}}(0)|  |\mathbf{ z}|.
\end{aligned}
\nonumber
\end{equation}
Since each $\mathbf{{C}}(k)$ is an unitary matrix, and each $\mathbf{{B}}(k)$ and $\mathbf{{V}}(t-1)$ are unitary matrices up to scaling, we get $\| \mathbf{{C}}(k) \|_2=1$ and $\|\mathbf{{B}}(k)\|_2= \|\mathbf{{V}}(t-1)\|_2=\sqrt{2}$. Hence,
\begin{equation}
\| \mathbf{ y} - \widehat{\mathbf{ y}} \|_2 \leq \frac{t \nu_{+}}{1- t \nu_+} 2^{t} \| \mathbf{z}  \|_2,
\nonumber
\end{equation}
where $\nu_{+}=\eta_{+}\gamma_3 + \eta_{+}+ \gamma_3$. Now following $\mathbf{ V}_N \mathbf{ V}_N^H = N \cdot I_N$, we get $\| \mathbf{ y} \|_2 = \sqrt{n} \| \mathbf{ z} \|_2$, and hence the result.
\end{proof}

\begin{corollary}
\label{CoThErR2V}
Let $\widehat{\mathbf{ y}}=fl(\mathbf{ V}_N \mathbf{ z})$, where $N=2^t(t \geq 2)$, be computed using the algorithm $\mathbf{ vancc(z, N)}$, and assume that \eqref{scerror} holds. Then the proposed radix-2 algorithm for Vandermonde matrices i.e.  $\mathbf{ vancc(z, N)}$ is numerically stable.
\end{corollary}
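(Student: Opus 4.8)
The plan is to read numerical stability off directly from the forward error bound of Theorem~\ref{ThErR2V}: I would argue that its right-hand side equals the unit roundoff $u$ multiplied by a factor that grows only polynomially (indeed like $\sqrt{N}\log_2 N$) in the matrix size $N$, which by the standard convention of~\cite{H96} is precisely the statement that $\mathbf{vancc(z, N)}$ is numerically stable.

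First I would expand the amplification constant $\nu_+$ in powers of $u$. Using $\gamma_k = ku/(1-ku) = ku + O(u^2)$ together with $\mu_+ = c_1 u$, the quantity $\eta_+ = \mu_+ + \gamma_4(1+\mu_+)$ reduces to $(c_1+4)u + O(u^2)$, whence
\begin{equation}
\nu_+ = \eta_+\gamma_3 + \eta_+ + \gamma_3 = (c_1+7)u + O(u^2) = O(u),
\nonumber
\end{equation}
since the cross term $\eta_+\gamma_3$ is second order. Substituting $t = \log_2 N$ then gives $t\nu_+ = O(u\log_2 N)$, which tends to $0$ as $u \to 0$; in particular $1 - t\nu_+ \ge \tfrac{1}{2}$ once $u$ lies in the usual working regime, so the amplification factor $t\nu_+/(1-t\nu_+)$ is itself $O(u\log_2 N)$.

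Feeding this back into \eqref{R2V} yields
\begin{equation}
\frac{\|\mathbf{y}-\widehat{\mathbf{y}}\|_2}{\|\mathbf{y}\|_2} \le \frac{t\nu_+}{1-t\nu_+}\,N^{1/2} = O\!\left(u\,\sqrt{N}\,\log_2 N\right),
\nonumber
\end{equation}
so that the relative forward error is bounded by $u$ times the slowly growing factor $\sqrt{N}\log_2 N$. As this factor is dominated by any fixed power $N^{\alpha}$ with $\alpha>\tfrac{1}{2}$, the algorithm meets the standard definition of numerical stability and the corollary follows.

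The only point requiring care --- and the nearest thing to an obstacle --- is the regularity condition $t\nu_+ < 1$ under which the bound of Theorem~\ref{ThErR2V} is valid and the denominator stays positive; this is guaranteed whenever $u \ll 1/(c_1\log_2 N)$, the regime in which the stability claim is intended. Beyond this, the argument is just a first-order expansion in $u$, so the substantive work is entirely inherited from the already-established error bound.
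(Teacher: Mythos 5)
Your first-order expansion of $\nu_+$ is sound and matches the paper's own observation that $\mu_+=\mathcal{O}(u)$ forces the amplification factor in \eqref{R2V} to be $\mathcal{O}(u\log_2 N)$, but the proposal stops at a small \emph{forward} error and then asserts that this alone "meets the standard definition of numerical stability." That is the gap: in the sense of \cite{H96} (the framework the paper works in), numerical stability is a mixed forward--backward notion, and a forward error bound by itself does not deliver it without some additional structure. The paper supplies exactly that structure in two steps that your argument omits. First, it converts the forward perturbation into a backward one: writing $\widehat{\mathbf{y}}=\mathbf{y}+\Delta\mathbf{y}=\mathbf{V}_N\mathbf{z}+\Delta\mathbf{y}$ and using $\mathbf{V}_N\mathbf{V}_N^H=N\,I_N$ (so $\mathbf{V}_N$ is unitary up to scaling and perfectly conditioned), it obtains $\widehat{\mathbf{y}}=\mathbf{V}_N(\mathbf{z}+\Delta\mathbf{z})$ with $\|\Delta\mathbf{z}\|_2/\|\mathbf{z}\|_2=\|\Delta\mathbf{y}\|_2/\|\mathbf{y}\|_2$, so the tiny forward error \emph{is} a tiny relative backward perturbation of the input. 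Second, it benchmarks the bound \eqref{R2V} against the brute-force evaluation of $\mathbf{V}_N\mathbf{z}$, whose error satisfies \eqref{se}, i.e.\ $\gamma_{N+2}N^{1/2}=\mathcal{O}(uN^{3/2})$; since $\mathcal{O}(u\sqrt{N}\log_2 N)$ is much smaller, the fast algorithm is at least as stable as direct computation, which is the operative criterion for declaring a fast algorithm backward (hence numerically) stable.

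Your instinct to invoke the "slowly growing polynomial factor" convention is not wrong in spirit --- because $\kappa_2(\mathbf{V}_N)=1$, a relative forward error of $\mathcal{O}(u\sqrt{N}\log_2 N)$ does translate into an equally small backward error here --- but you never invoke the unitarity-up-to-scaling of $\mathbf{V}_N$ that makes this translation legitimate, and that identity is the load-bearing fact. To close the argument, add the line $\Delta\mathbf{z}=\tfrac{1}{N}\mathbf{V}_N^H\Delta\mathbf{y}$ with $\|\Delta\mathbf{z}\|_2/\|\mathbf{z}\|_2=\|\Delta\mathbf{y}\|_2/\|\mathbf{y}\|_2$, and (to match the paper's stated standard) the comparison with \eqref{se}. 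Your remark about the regularity condition $t\nu_+<1$ is a reasonable point of care that the paper leaves implicit.
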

\begin{proof}
Theorem \ref{ThErR2V} immediately follows that the proposed radix-2 algorithm for Vandermonde matrices i.e.  $\mathbf{ vancc(z, N)}$ can be computed with tiny forward error provided that the weights i.e. ${\omega}_{+}^k$ are computed stably. On the other hand,  $\widehat{\mathbf{ y}} = \mathbf{ y}+ \Delta\mathbf{ y}=\mathbf{ V}_N \mathbf{ z} + \Delta\mathbf{ y}$. Thus, we get $\widehat{\mathbf{ y}} = \mathbf{ V}_N( \mathbf{ z}+ \Delta\mathbf{ z})$ and $\frac {\| \Delta\mathbf{ z} \|_2}{ \| \mathbf{z} \|_2}= \frac {\| \Delta\mathbf{ y} \|_2}{ \| \mathbf{y} \|_2}$. If we compute $\mathbf{ y}=\mathbf{ V}_N \mathbf{ z}$ using the brute force computation, we get
\[
|\mathbf{ y} - \widehat{\mathbf{ y}}  | \leq \gamma_{N+2} | \mathbf{ V}_N|  |\mathbf{ z}|.
\]
Since $\mathbf{ V}_N$ is unitary w. r. t. scaling, this immediately reduces to
\begin{equation}
\frac{\| \mathbf{ y} - \widehat{\mathbf{ y}} \|_2}{\|  \mathbf{ y} \|_2} \leq \gamma_{N+2} N^{\frac{1}{2}}.
\label{se}
\end{equation}
As $\mu_{+}=\mathcal{O}(u)$, the error (\ref{R2V}) of the proposed radix-2 algorithm is much more smaller than that in \eqref{se}. Thus, the proposed algorithm is backward stable. Hence, the proposed algorithm is numerically stable.
\end{proof}

\begin{theorem}
\label{Th2ErR2V}
Let $\widehat{\mathbf{ y}}=fl(\mathbf{ V}_N \mathbf{ z})$, where $N=2^t(t \geq 2)$, be computed using the algorithm $\mathbf{ vanc(z, N)}$, and assume that \eqref{scerror} holds. Then
\begin{equation}
\begin{aligned}
\frac{\| \mathbf{ y} - \widehat{\mathbf{ y}} \|_2}{\|  \mathbf{ y} \|_2} \leq \frac{t \nu_{-}}{1- t \nu_-} N^{\frac{1}{2}}
\end{aligned}
\label{R2V2}
\end{equation}
where $\nu_{-}=\eta_{-}\gamma_3 + \eta_{-}+ \gamma_3$ and $\eta_{-}=\mu_{-}+\gamma_4(1+\mu_{-})$.
\end{theorem}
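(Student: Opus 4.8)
The plan is to mirror the argument used for Theorem~\ref{ThErR2V} (the counterclockwise algorithm $\mathbf{vancc(z, N)}$), replacing every object associated with that algorithm by its clockwise counterpart arising in $\mathbf{vanc(z, N)}$. Concretely, the combined matrix $\mathbf{B}_N$ is replaced by $\check{\mathbf{B}}_N$ (built from $\bar{\dot{D}}_{\frac{N}{2}}=\operatorname{diag}[e^{-l(\frac{2\pi j}{N})}]$), the scaling matrix $\mathbf{C}_N$ is replaced by $\bar{\mathbf{C}}_N$ (carrying $\bar{c}$), and the computed weights $\widehat{\omega}_{-}^{k}$ obey \eqref{scerror} with the bound $\mu_{-}$ in place of $\mu_{+}$. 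Because conjugation sends $e^{l(\frac{2\pi j}{N})}$ to $e^{-l(\frac{2\pi j}{N})}$ without altering any modulus, every magnitude inequality in the proof of Theorem~\ref{ThErR2V} persists verbatim, and the entire chain of bounds transfers under the substitution of $-$ for $+$ throughout.

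First I would expand $\widehat{\mathbf{y}}=fl(\mathbf{V}_N\mathbf{z})$ according to $\mathbf{vanc(z, N)}$ as the floating-point evaluation of $\mathbf{P}(0)\mathbf{P}(1)\cdots\mathbf{P}(t-2)\,\mathbf{V}(t-1)\,\widehat{\check{\mathbf{B}}}(t-2)\bar{\mathbf{C}}(t-2)\cdots\widehat{\check{\mathbf{B}}}(0)\bar{\mathbf{C}}(0)\,\mathbf{z}$, where each $\widehat{\check{\mathbf{B}}}(k)$ is assembled from the computed weights. Since each block $\check{\mathbf{B}}_{\frac{N}{2^k}}$ has exactly two nonzeros per row, complex arithmetic gives $|\Delta\widehat{\check{\mathbf{B}}}(k)|\le\gamma_4|\widehat{\check{\mathbf{B}}}(k)|$, while \eqref{scerror} yields $\widehat{\check{\mathbf{B}}}(k)=\check{\mathbf{B}}(k)+\Delta\check{\mathbf{B}}(k)$ with $|\Delta\check{\mathbf{B}}(k)|\le\mu_{-}|\check{\mathbf{B}}(k)|$. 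The one-nonzero-per-row matrices $\bar{\mathbf{C}}(k)$ satisfy $|\Delta\bar{\mathbf{C}}(k)|\le\gamma_3|\bar{\mathbf{C}}(k)|$, and the $2\times2$ diagonal blocks of $\mathbf{V}(t-1)$ give $|\Delta\mathbf{V}(t-1)|\le\gamma_3|\mathbf{V}(t-1)|$. Folding the weight error into the rounding error then produces combined perturbations $\mathbf{E}(k)$ obeying $|\mathbf{E}(k)|\le(\mu_{-}+\gamma_4(1+\mu_{-}))|\check{\mathbf{B}}(k)|=\eta_{-}|\check{\mathbf{B}}(k)|$.

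Next I would apply the product-perturbation bound \cite[Lemma~3.7]{H96} to estimate $|\mathbf{y}-\widehat{\mathbf{y}}|$ by $[(1+\eta_{-})^{t-1}(1+\gamma_3)^{t}-1]$ times the product of the moduli matrices applied to $|\mathbf{z}|$. The unitarity computation of Theorem~\ref{thm:van} applies unchanged to the conjugated factors, so $\|\bar{\mathbf{C}}(k)\|_2=1$ and $\|\check{\mathbf{B}}(k)\|_2=\|\mathbf{V}(t-1)\|_2=\sqrt{2}$, exactly the values used in Theorem~\ref{ThErR2V}; collecting the $t$ nontrivial factors therefore produces the same prefactor $2^{t}$, and linearizing the bracketed product with the $\gamma$-calculus recalled above yields $\|\mathbf{y}-\widehat{\mathbf{y}}\|_2\le\frac{t\nu_{-}}{1-t\nu_{-}}2^{t}\|\mathbf{z}\|_2$ with $\nu_{-}=\eta_{-}\gamma_3+\eta_{-}+\gamma_3$. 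Finally, $\mathbf{V}_N\mathbf{V}_N^H=N\,I_N$ forces $\|\mathbf{y}\|_2=\sqrt{N}\,\|\mathbf{z}\|_2$, and dividing delivers the relative bound \eqref{R2V2}.

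Since this is a faithful transcription of the counterclockwise case, the only genuinely substantive check---and hence the main obstacle---is confirming that conjugation preserves both unitarity properties and every modulus, so that the structural nonzero counts and the constants $\gamma_3,\gamma_4$ are identical to the $\mathbf{vancc}$ setting; once that is verified, the replacement $+\mapsto-$ is purely mechanical and the bound follows immediately.
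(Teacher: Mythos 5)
Your proposal is correct and follows exactly the route the paper takes: the paper's own proof of Theorem~\ref{Th2ErR2V} simply states that it follows the lines of Theorem~\ref{ThErR2V} with $\widehat{\check{\mathbf{B}}}(k)$, $\bar{\mathbf{C}}(k)$, $\widehat{\omega}^k_{-}$, and $\mu_{-}$ substituted for their counterclockwise counterparts, which is precisely the substitution you carry out (and you usefully spell out the details the paper leaves implicit, including the preservation of the moduli and unitarity under conjugation).
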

\begin{proof}
The proof follows similar lines as that of Theorem \ref{ThErR2V} except $\widehat{\check{\mathbf{ B}}}(k)$, $\mathbf{ \bar{C}}(k)$, $\widehat{\omega}^k_{-}$, and $\mu_{-}$ instead of $\widehat{\mathbf{ B}}(k)$, $\mathbf{ {C}}(k)$, $\widehat{\omega}^k_{+}$, and $\mu_{+}$, respectively.
\end{proof}

\begin{corollary}
\label{Co2ThErR2V}
Let $\widehat{\mathbf{ y}}=fl(\mathbf{ V}_N \mathbf{ z})$, where $N=2^t(t \geq 2)$, be computed using the algorithm $\mathbf{ vanc(z, N)}$, and assume that \eqref{scerror} holds. Then the proposed radix-2 algorithm for Vandermonde matrices i.e.  $\mathbf{ vanc(z, N)}$ is numerically stable.
\end{corollary}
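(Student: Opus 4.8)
The plan is to mirror the argument of Corollary \ref{CoThErR2V} almost verbatim, with the clockwise-direction quantities replacing their counterclockwise counterparts. The starting point is Theorem \ref{Th2ErR2V}, which already supplies the small forward error bound \eqref{R2V2} for $\mathbf{vanc(z, N)}$ in terms of $\nu_{-}$, $\eta_{-}$, and $\mu_{-}$. Since $\mu_{-} = \mathcal{O}(u)$ under the hypothesis \eqref{scerror}, this guarantees that $\widehat{\mathbf{y}}$ is computed with tiny forward error, provided the weights $\omega_{-}^k = e^{-k(2\pi j/N)}$ are evaluated stably.

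First I would write $\widehat{\mathbf{y}} = \mathbf{y} + \Delta\mathbf{y} = \mathbf{V}_N\mathbf{z} + \Delta\mathbf{y}$ and exploit the fact, established inside the proof of Theorem \ref{thm:van}, that $\mathbf{V}_N$ is unitary up to the scaling $\frac{1}{\sqrt{N}}$. This lets me absorb the output perturbation back into the input, expressing $\widehat{\mathbf{y}} = \mathbf{V}_N(\mathbf{z} + \Delta\mathbf{z})$ with the relative backward error $\|\Delta\mathbf{z}\|_2 / \|\mathbf{z}\|_2$ equal to the relative forward error $\|\Delta\mathbf{y}\|_2 / \|\mathbf{y}\|_2$, exactly the identity used in Corollary \ref{CoThErR2V}.

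Next I would compare against the brute-force computation. Componentwise rounding gives $|\mathbf{y} - \widehat{\mathbf{y}}| \leq \gamma_{N+2}\,|\mathbf{V}_N|\,|\mathbf{z}|$, which collapses to the relative bound \eqref{se}, namely $\|\mathbf{y} - \widehat{\mathbf{y}}\|_2 / \|\mathbf{y}\|_2 \leq \gamma_{N+2}N^{1/2}$, by once more invoking that $\mathbf{V}_N$ is unitary up to scaling together with $\mathbf{V}_N\mathbf{V}_N^H = N\cdot I_N$. Because $\mu_{-} = \mathcal{O}(u)$, the bound \eqref{R2V2} for the radix-2 algorithm is far smaller than the brute-force bound \eqref{se}, so $\mathbf{vanc(z, N)}$ is backward stable and hence numerically stable.

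The only subtlety, which is not really an obstacle, is that the conclusion rests on the clause that the weights $\omega_{-}^k$ are computed stably, precisely as for $\mathbf{vancc}$; once this is granted, nothing new beyond the sign changes carried by the diagonal factors $\check{\mathbf{D}}_N$ and $\bar{\mathbf{C}}_N$ needs to be checked, since these factors share the same norms and sparsity pattern as $\hat{\mathbf{D}}_N$ and $\mathbf{C}_N$, so every norm estimate in the chain goes through unchanged.
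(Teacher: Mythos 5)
Your proposal is correct and follows exactly the route the paper takes: the paper's own proof of Corollary~\ref{Co2ThErR2V} simply states that it ``follows similar lines as in Corollary~\ref{CoThErR2V},'' and you have spelled out precisely that argument, substituting Theorem~\ref{Th2ErR2V}, $\mu_{-}$, $\nu_{-}$, and the clockwise factors $\check{\mathbf{D}}_N$, $\bar{\mathbf{C}}_N$ for their counterclockwise counterparts. Nothing further is needed.
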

\begin{proof}
The proof follows similar lines as in Corollary \ref{CoThErR2V}.
\end{proof}

\subsection{Numerical Results}
We will now state numerical results in connection to the error bounds of the proposed radix-2 algorithms for Vandermonde matrices and compare the results with the error bound of the radix-2 FFT algorithm analyzed in \cite{H96}. %
With the help of the radix-2 factorization of the DFT matrices in \cite{VL92}, it was proved in \cite{H96} that the error bound on computing radix-2 FFT algorithm is given by;
\begin{equation}
\frac{\| \mathbf{ y} - \widehat{\mathbf{ y}} \|_2}{\|  \mathbf{ y} \|_2} \leq \frac{t \eta}{1- t \eta} N^{\frac{1}{2}}
\label{EF}
\end{equation}
where $\widehat{\mathbf{ y}}=fl(\mathbf{ F}_N \mathbf{ x})$, $\mathbf{ F}_N$ is the DFT matrix, $N=2^t$, $\eta=\mu + \gamma_4(1+\mu)$, and $\mu$ depends on the methods for computing the weights as specified in \cite{VL92}. We compare the error bounds of the proposed radix-2 algorithms for Vandermonde matrices shown in \eqref{R2V} and \eqref{R2V2} with the radix-2 FFT algorithm \eqref{EF} using MATLAB(R2014a version). In these calculations, we have chosen $\mu=\mu_{+}=\mu_{-}=10^{-15}$ and $\gamma_N=\frac{Nu}{1-Nu}$ where $N=2^t$ and $u$ is the machine precision. Since $\mu = \mathcal{O}(u)$, we have chosen $u=10^{-15}$. Table \ref{tbl:EB} shows the error bounds of the proposed radix-2 algorithms for Vandermonde matrices and radix-2 FFT algorithm in \cite{H96}.

\begin{table}[h]
\centering
\caption{Error bounds of the proposed radix-2 algorithms (i.e. $\mathbf{ vancc(z, N)}$ and $\mathbf{ vanc(z, N)}$) vs radix-2 FFT algorithm \cite{H96}}
\begin{tabular}{ | l | l | l | }
    \hline
    $N$ & Error Bound & Error Bound  \\
 &  $\mathbf{ vancc(z, N)}$/$\mathbf{ vanc(z, N)}$   & FFT  \\
\hline \hline
4&   $3.2 \times 10^{-14}$ &   $2 \times 10^{-14}$  \\ \hline
8&   $6.8 \times 10^{-14}$ &  $4.2 \times 10^{-14}$  \\ \hline
16  &   $ 1.3 \times 10^{-13}$ &  $8 \times 10^{-14}$ \\ \hline
   32    & $ 2.3 \times 10^{-13}$   & $1.4 \times 10^{-13}$   \\ \hline
64   &  $ 3.8 \times 10^{-13}$ & $2.4 \times 10^{-13}$  \\ \hline
128  & $ 6.3 \times 10^{-13}$ & $4 \times 10^{-13}$ \\ \hline
256  & $ 1 \times 10^{-12}$ & $6.4 \times 10^{-13}$ \\ \hline
512 & $ 1.6 \times 10^{-12}$ & $1 \times 10^{-12}$  \\ \hline
1024   & $ 2.6 \times 10^{-12}$ &  $1.6 \times 10^{-12}$  \\ \hline
2048  & $ 4 \times 10^{-12}$ &  $2.5 \times 10^{-12}$  \\ \hline
4096  & $ 6.1 \times 10^{-12}$ & $3.8 \times 10^{-12}$  \\ \hline
\hline
\end{tabular}
\label{tbl:EB}
\end{table}

Based on the numerical results shown in Table \ref{tbl:EB}, the proposed radix-2 algorithms for Vandermode matrices and radix-2 FFT algorithm have the same error orders except for $N=16,$ and $256$. Even with these two $N$ values, error orders of the proposed algorithms and FFT vary only by $10^{-1}$ and relatively very low. To sum up, Table \ref{tbl:EB} shows that the proposed radix-2 algorithms for Vandermonde matrices provide tiny forward errors.

\section{Signal Flow Graphs for Radix-2 Vandermonde Algorithms}
\label{sec:sfg}
In this section, we use signal flow graphs to illustrate the connection between algebraic operations used in sparse and orthogonal factorization of Vandermonde matrices with the fundamental signal flow graphs
(SFG) building blocks (i.e. adders and multipliers). We provide two signal flow graphs to show the simplicity of the proposed radix-2 algorithms for Vandermonde matrices. Being pivotal for efficient physical implementation in hardware, SFGs should represent a numerical algorithm in its fully factorized form in such a way that more sparse matrices are resulted and, as a consequence, less arithmetic operations demanded. Thus, Fig.~\ref{figure-8-pt-vanc} displays the SFG for the proposed $\mathbf{ vanc(z, N)}$ algorithm for the case $N=8$.
The recursive nature is evident as we express the 8-point SFG in terms of the 4- and 2-point SFGs. Notice that, the SFG of the $\mathbf{ vancc(z, N)}$ algorithm is not presented because the delays have been replaced with time advances that are not realizable in real-time circuits. But for the software implementation purposes, we have proposed $\mathbf{ vancc(z, N)}$ algorithm in Section \ref{sub:algovan} to effectively compute Vandermonde matrices.

\begin{figure}
\centering
\subfigure[8-point]{\includegraphics{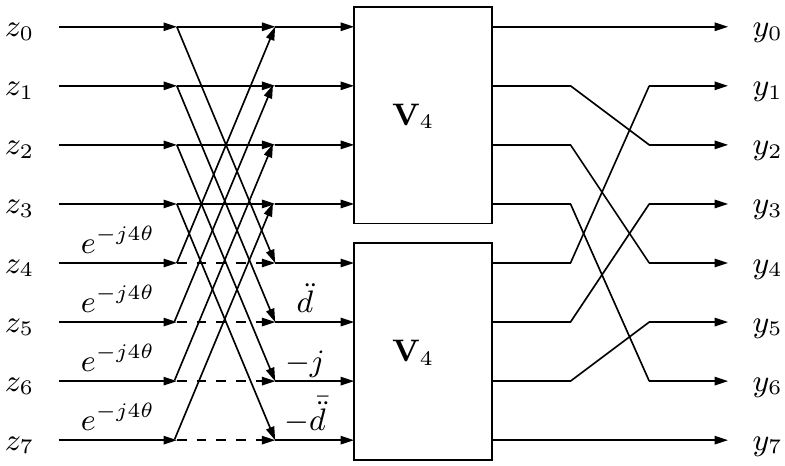}}
\subfigure[4-point]{\includegraphics{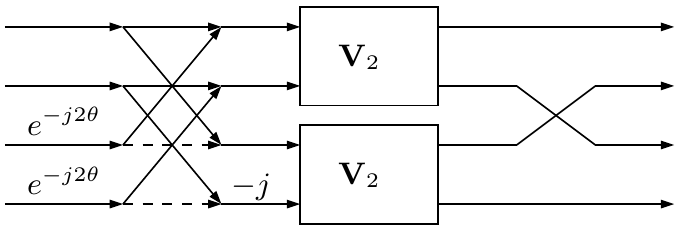}}
\subfigure[2-point]{\includegraphics{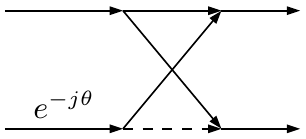}}
\caption{Signal flow graph of the 2-, 4-, and 8-point \textbf{vanc} decompositions, where $\ddot{d} = \frac{\sqrt{2}}{2} (1-j)$ and dashed arrows represent multiplication by $-1$.}
\label{figure-8-pt-vanc}
\end{figure}

\section{Conclusion}
\label{sec:con}
We have proposed novel self-recursive radix-2 algorithms for Vandermonde matrices. These algorithms have sparse and orthogonal factors. We have shown that the well known radix-2 DFT algorithm is a subclass of the proposed algorithms for the Vandermonde matrices.
The proposed algorithms attain the lowest gain-delay-block counts for Vandermonde matrices by a vector, in the literature.
Theoretical error bounds on computing the radix-2 algorithms and stability of the proposed algorithms are established. Numerical results of the forward error bounds of the proposed radix-2 algorithms are compared with the radix-2 FFT algorithm. The proposed radix-2 algorithms have shown tiny forward and backward errors when weights are computed stably.
Signal flow graphs were presented to show the simplicity of the proposed algorithm and to realize high-frequency analog circuits. Using the radix-2 algorithms for Vandermonde matrices associated with true time delay based delay-sum filterbanks, we have reduced the circuit complexity of multi-beam analog beamforming systems significantly.

\end{document}